\DeclareMathAlphabet{\mathpzc}{OT1}{pzc}{m}{it}
\DeclareMathAlphabet{\mathcalligra}{T1}{calligra}{m}{n}
\def\Tr{\operatorname{Tr}} \def\>{\rangle} \def\<{\langle}
 \def\id{\mathsf{id}}
\def\test#1{\mathfrak{#1}} \def\outx{\mathcal{X}}
\def\outy{\mathcal{Y}} \def\mE{\mathcal{E}} \def\mL{\mathcal{L}}
\def\sH{\mathcal{H}} \def\sK{\mathcal{K}} \def\povm#1{\mathrm{#1}}
\def\sS{\mathfrak{S}}  
\def\bound{\textswab{L}} \def\insym{\mathrm{in}}
\def\outsym{\mathrm{out}}
 \def\exp#1{\mathscr{#1}}
\def\qexp#1{\boldsymbol{\mathsf{#1}}} \def\dec#1{\mathscr{D}}
\def\s{\mathsf{s}} \def\openone{\mathds{1}}
\def\msucc{\succ_{\!\mathrm{m}}}
\renewcommand{\qedsymbol}{\nobreak \ifvmode \relax \else
  \ifdim \lastskip<1.5em \hskip-\lastskip \hskip1.5em plus0em
  minus0.5em \fi \nobreak \vrule height0.75em width0.5em
  depth0.25em\fi}
\renewcommand{\ge}{\geqslant}
\renewcommand{\le}{\leqslant}
\newtheorem{theorem}{Theorem}
\newtheorem{lemma}{Lemma}
\newtheorem{proposition}{Proposition}
\newtheorem{postulate}{Postulate}
\newtheorem*{proposition0}{Proposition 1}
\theoremstyle{remark}
\newtheorem{remark}{Remark}
\theoremstyle{definition}
\newtheorem{definition}{Definition}
\begin{document}



\title{Comparison of quantum statistical models:\\equivalent conditions
  for sufficiency}



\author{{\sc Francesco Buscemi}\footnote{buscemi@iar.nagoya-u.ac.jp}\\
  \small{\em Institute for Advanced Research,
    University of Nagoya}\\
  \small{\em Chikusa-ku, Nagoya 464-8601, Japan.}}





\date{}

\maketitle

\begin{abstract}

  A family of probability distributions (i.e. a statistical model) is
  said to be sufficient for another, if there exists a transition
  matrix transforming the probability distributions in the former to
  the probability distributions in the latter. The
  Blackwell-Sherman-Stein (BSS) theorem provides necessary and
  sufficient conditions for one statistical model to be sufficient for
  another, by comparing their information values in statistical
  decision problems. In this paper we extend the BSS theorem to
  quantum statistical decision theory, where statistical models are
  replaced by families of density matrices defined on
  finite-dimensional Hilbert spaces, and transition matrices are
  replaced by completely positive, trace-preserving maps
  (i.e. coarse-grainings). The framework we propose is suitable for
  unifying results that previously were independent, like the BSS
  theorem for classical statistical models and its analogue for pairs
  of bipartite quantum states, recently proved by Shmaya. An important
  role in this paper is played by \emph{statistical morphisms},
  namely, affine maps whose definition generalizes that of
  coarse-grainings given by Petz and induces a corresponding criterion
  for statistical sufficiency that is weaker, and hence easier to be
  characterized, than Petz's.
\end{abstract}

{\footnotesize{\noindent{\bf Keywords:} comparison of experiments,
    Blackwell-Sherman-Stein theorem, statistical sufficiency, quantum
    statistical models, quantum information structures, statistical
    morphisms}}



\section{Introduction}

The task in which an experimenter tries to learn about the true value
of an unknown parameter by observing a random variable whose
distribution depends on such a value, is generally called a
statistical estimation task or a statistical decision problem. The
mathematical structure used to describe such a situation is called
\emph{statistical model}, i.e. a family of probability distributions
(or, more generally, measures) indexed by a parameter set, which
represents the unknown parameter one wants to learn about in the
estimation process.

An important subject in classical statistics is the comparison of
statistical models in terms of their ``information value'' in
statistical decision problems~\cite{black49, wald, black1, sherman,
  stein, blackwell, lecam1, torgersen, lecam, goel-ginebra}. Within
this area, one of the most important results has been proved by
Blackwell, Sherman, and Stein~\cite{black1, sherman, stein,
  blackwell}: the theorem states equivalent conditions for one
statistical model being \emph{more informative} than another. More
explicitly, the Blackwell-Sherman-Stein (from now on, BSS) theorem
proves that one statistical model carries more information than
another if and only if the former is \emph{sufficient} for the latter,
namely, if and only if there exists a transition matrix (a Markov
kernel) mapping the probability distributions (measures) in the former
to the probability distributions (measures) in the latter.

In quantum statistical decision theory~\cite{holevo,ozawa}, where
statistical models are replaced by families of non-commuting density
operators (i.e. quantum statistical models), the notion of sufficiency
has been introduced and developed by Petz~\cite{petz,petz2}, by
replacing Markov kernels with completely positive (or, at least,
two-positive) trace-preserving maps, i.e. coarse-grainings. However,
the idea of applying to the quantum case concepts from the theory of
comparison of statistical models \emph{\`a la} BSS, like e.g. the
concept of information value, has not been explicitly pursued until
recently, in a work by Shmaya~\cite{shmaya}: there, partial ordering
relations between pairs of bipartite quantum states (analogous, in a
way, to the partial ordering relations used in the BSS theorem) are
introduced, and an equivalence relation between such partial orderings
is established. Subsequently, in~\cite{chefles}, Chefles reformulated
Shmaya's result for the comparison of pairs of quantum
channels. However, the equivalence relations proved in~\cite{shmaya}
and~\cite{chefles} neither imply any criterion for the comparison of
quantum statistical models, nor are they more general than the BSS
theorem, with which they are, in fact, logically
\emph{unrelated}. This is due to the fact that both Shmaya and Chefles
need, in their proofs, quantum entanglement: as such, their results
are \emph{purely quantum} and cannot be compared with the case of
classical statistics, where quantum entanglement is not available.

The aim of this paper is to bridge the gap mentioned above, by
developing a general theory for the comparison of statistical models,
which can be applied both to the classical and the quantum
(i.e. non-commutative) setting. In order to do this, it is
mathematically convenient to relax the definition of sufficiency
introduced by Petz~\cite{petz,petz2} and define a weaker notion of
sufficiency, which we call \emph{m-sufficiency}, based on the concept
of \emph{statistical morphisms}\footnote{The term ``statistical
  morphism'' has been introduced in the classical setting by Morse and
  Sacksteder~\cite{morse}. In this paper we use the same term, but in
  a non-commutative setting.}. Statistical morphisms are affine maps
satisfying the minimum requirements necessary to make them meaningful
in a statistical sense: in fact, as we will carefully argue in what
follows, even the requirement of positivity can be lifted, without
compromising the formalism. In spite of their generality, statistical
morphisms are sufficiently well-behaved, so that, in some cases, they
can be extended to completely positive coarse-grainings. This fact is
proved in two extension theorems, of crucial importance in this paper,
analogous to those proved for positive maps by Choi (Theorem~6
in~\cite{choi}) and Arveson (Proposition 1.2.2 in~\cite{arveson}).

The generality of the definition of statistical morphisms makes the
main result proved here applicable to both commutative and
non-commutative scenarios. When specialized to the classical setting,
our result provides an alternative proof of the BSS theorem, while,
in the quantum setting, an equivalent characterization of Petz's
sufficiency criterion is obtained. An intermediate, `hybrid'
quantum-classical case is also considered and completely
characterized. We are also able to recover Shmaya's result as a
special case, although here, in contrast with Refs.~\cite{shmaya}
and~\cite{chefles}, we \emph{never} need to resort to any additional
entangled resource.

The paper is organized as follows: in Section~\ref{sec:classical} we
briefly review the notions of statistical models, statistical decision
problems, and comparison of statistical models in classical
statistics. In Section~\ref{sec:defs} we introduce some basic
definitions, extending the idea of comparison of statistical models to
finite dimensional quantum systems. In Section~\ref{sec:proc}, we
introduce the notions of statistical morphisms and m-sufficiency. In
Section~\ref{sec:ext} we prove two extension theorems for statistical
morphisms. Section~\ref{sec:main} contains the main result, which is
then applied, in Section~\ref{sec:black}, in order to recover the BSS
theorem, characterize a semi-classical scenario, and obtain an
equivalent characterization of Petz's sufficiency
relation. Section~\ref{sec:shmaya} deals with the scenario originally
considered in Ref.~\cite{shmaya} and the result proved by Shmaya is
recovered without the need of any entangled auxiliary
resource. Finally, Section~\ref{sec:concl} concludes the paper with
the summary of its contents and one remark about generalized
probabilistic theories.

\section{Classical formulation}\label{sec:classical}

A (finite) \emph{statistical model} $\exp{E}$ is defined by a triple
$(\Theta,\Delta,\boldsymbol{\alpha})$, where $\Theta$ is a (finite)
parameter set $\{\theta\}_{\theta\in\Theta}$, $\Delta$ is a (finite)
sample set $\{\delta\}_{\delta\in\Delta}$, and $\boldsymbol{\alpha}$
is a family $(p_\theta;\theta\in\Theta)$ of probability distributions
$p_\theta$ on $\Delta$, i.e., $p_\theta(\delta)\ge 0$ and
$\sum_{\delta\in\Delta}p_\theta(\delta)=1$. In the following, it will
sometimes be convenient to think of each $p_\theta$ as a
$|\Delta|$-dimensional probability vector
$\vec{p}_\theta=(p^1_\theta\;,\,\cdots\;,\,p_\theta^\delta\;,\,\cdots\;,\,p^{|\Delta|}_\theta)$,
whose components are defined as $p_\theta^\delta:=p_\theta(\delta)$.

\begin{remark}\label{rem:states}
  In many relevant situations, $\Delta$ can be considered as the set
  of possible states of a physical system, so that the probability
  distribution $p_\theta$ becomes the statistical description of the
  state of the system. This point of view, which is the guiding one in
  Ref.~\cite{holevo}, will be implicitly adopted here as well.
\end{remark}

A \emph{statistical decision problem} is defined by a triple
$(\exp{E},\outx,\ell)$, where
$\exp{E}=(\Theta,\Delta,\boldsymbol{\alpha})$ is a statistical model,
$\outx$ is a (finite) decision set $\{i\}_{i\in\outx}$, and
$\ell:\Theta\times\outx\to\mathbb{R}$ is a payoff function. The
decision problem works as follows: upon the observation (or state)
$\delta\in\Delta$, occurring with probability $p_\theta(\delta)$, the
statistician performs a decision, namely, he applies a
$\outx$-decision function $u:\Delta\to\outx$, gaining a payoff (or
suffering a loss, if negative) of $\ell(\theta,i)$, depending on the
``true'' law of nature $\theta$ that determined the observed state
$\delta$. The choice of the function $u:\Delta\to\outx$ corresponds to
the experimenter's choice of a \emph{strategy}.

The deterministic $\outx$-decision function $u:\Delta\to\outx$ is
often generalized to a \emph{randomized $\outx$-decision function} (or
$\outx$-r.d.f.)  $\phi$, which is a convex combination of
$\outx$-decision functions, i.e., a function mapping each
$\delta\in\Delta$ to a probability distribution $t_\delta$ on
$\outx$. A convenient way to represent a $\outx$-r.d.f. $\phi$ is by
giving conditional probabilities $t_\phi(i|\delta)\ge 0$,
i.e. non-negative real numbers such that
$\sum_{i\in\outx}t_\phi(i|\delta)=1$, for all $\delta\in\Delta$.

Given a decision problem $(\exp{E},\outx,\ell)$, for each
$\outx$-r.d.f. $\phi$, we introduce the payoff vector
$\vec{v}(\phi;\exp{E},\outx,\ell)\in\mathds{R}^{|\Theta|}$, whose
$\theta$-th component, representing the payoff gained if the true law
of nature is $\theta$, is defined as
\begin{equation}\label{eq:vn}
  v^\theta(\phi;\exp{E},\outx,\ell):=
  \sum_{i\in\outx}\ell(\theta,i)\sum_{\delta\in\Delta}t_\phi(i|\delta)p_{\theta}(\delta).
\end{equation}
Then, the following set
\begin{equation}
  \mathcal{C}(\exp{E},\outx,\ell):=\left\{\vec{v}(\phi;\exp{E},\outx,\ell)\left|\textrm{$\phi$
        is a $\outx$-r.d.f. on $\Delta$}\right.\right\}
\end{equation}
forms a (closed and bounded) convex subset of $\mathds{R}^{|\Theta|}$,
since it inherits the convex structure from the set of randomized
decision functions.

Let now $\exp{F}=(\Theta,\Delta',\boldsymbol{\beta})$ be another
statistical model, with the same parameter set $\Theta$ as for
$\exp{E}$, but with a different sample set $\Delta'$ and a different
family of probability distributions on $\Delta'$,
$\boldsymbol{\beta}=(q_\theta;\theta\in\Theta)$. Also for $\exp{F}$,
we can define, for each decision set $\outx$ and each payoff function
$\ell:\Theta\times\outx\to\mathbb{R}$, the convex set of achievable
payoff vectors as
\begin{equation}
  \mathcal{C}(\exp{F},\outx,\ell):=\left\{\vec{v}(\phi';\exp{F},\outx,\ell)\left|\textrm{$\phi'$
        is a $\outx$-r.d.f. on $\Delta'$}\right.\right\}.
\end{equation}

In classical statistics, the following partial ordering between
statistical models with the same parameter set $\Theta$ is introduced (see,
e.~g., Ref.~\cite{blackwell}):

\begin{definition}[Information Ordering]\label{def:class_order}
  The statistical model $\exp{E}=(\Theta,\Delta,\boldsymbol{\alpha})$
  is said to be \emph{always more informative than}
  $\exp{F}=(\Theta,\Delta',\boldsymbol{\beta})$, in formula
  $\exp{E}\supset\exp{F}$, if and only if, for every finite set of
  decisions $\outx$ and every payoff function
  $\ell:\Theta\times\outx\to\mathbb{R}$,
  $\mathcal{C}(\exp{E},\outx,\ell)\supseteq
  \mathcal{C}(\exp{F},\outx,\ell)$.
\end{definition}

In other words, $\exp{E}$ is said to be more informative than
$\exp{F}$ if every payoff vector attainable in the problem
$(\exp{F},\outx,\ell)$ is also attainable in the problem
$(\exp{E},\outx,\ell)$. The definition of information ordering between
statistical models can be simplified as follows. Given a statistical
model $\exp{E}=(\Theta,\Delta,\boldsymbol{\alpha})$, for every
decision problem $(\exp{E},\outx,\ell)$ and every
$\outx$-r.d.f. $\phi$, we define
\begin{equation}\label{eq:str-exp-pay}
\mathsf{s}(\exp{E},\outx,\ell,\phi):=\frac 1{|\Theta|}\sum_{\theta\in\Theta}\sum_{i\in\outx}\ell(\theta,i)\sum_{\delta\in\Delta}t_{\phi}(i|\delta)p_\theta(\delta).
\end{equation}
The maximum of $\mathsf{s}(\exp{E},\outx,\ell,\phi)$ over all
$\outx$-r.d.f. $\phi$ is correspondingly defined as
\begin{equation}\label{eq:max-exp-pay}
\begin{split}
  \$(\exp{E},\outx,\ell):&=\max_{\phi:\textrm{ $\outx$-r.d.f.}}\frac 1{|\Theta|}\sum_{\theta\in\Theta}\sum_{i\in\outx}\ell(\theta,i)\sum_{\delta\in\Delta}t_{\phi}(i|\delta)p_\theta(\delta)\\
  &=\max_{\phi:\textrm{ $\outx$-r.d.f.}}\frac
  1{|\Theta|}\sum_{\theta\in\Theta}v^{\theta}(\phi;\exp{E},\outx,\ell).
\end{split}
\end{equation}
In the Bayesian approach, when there is no compelling reason to treat
the sample set differently from the parameter set, it is reasonable
to interpret the factor $1/|\Theta|$ as an \emph{a priori} probability
distribution over the unknown parameter $\theta\in\Theta$. In this
framework, the function $\$(\exp{E},\outx,\ell)$ is understood
as the optimal expected payoff, and the following partial ordering
between statistical models governed by the same parameter set
$\Theta$ is introduced (see, e.~g., Ref.~\cite{goel-ginebra}):

\begin{definition}[Bayesian Information
  Ordering]\label{def:bayes_order}
  The statistical model $\exp{E}=(\Theta,\Delta,\boldsymbol{\alpha})$
  is said to be \emph{bayesianly always more informative than}
  $\exp{F}=(\Theta,\Delta',\boldsymbol{\beta})$, in formula
  $\exp{E}\supset_{\mathrm{Bayes}}\exp{F}$, if and only if, for every
  finite decision set $\outx$ and every payoff function
  $\ell:\Theta\times\outx\to\mathbb{R}$,
  $\$(\exp{E},\outx,\ell)\ge \$(\exp{F},\outx,\ell)$.
\end{definition}

In other words, $\exp{E}$ is said to be bayesianly more informative
than $\exp{F}$ if every expected payoff attainable in the problem
$(\exp{F},\outx,\ell)$ is also attainable in the problem
$(\exp{E},\outx,\ell)$. In Appendix~\ref{app:a} we report the proof of
the following basic fact:

\begin{proposition}\label{prop0}
$\exp{E}\supset \exp{F}$ if and only if
$\exp{E}\supset_{\mathrm{Bayes}}\exp{F}$.
\end{proposition}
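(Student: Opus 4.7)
The plan is to show the two directions separately, with the forward one immediate and the reverse one following from a separating-hyperplane argument.

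For the forward implication, assume $\exp{E}\supset\exp{F}$. Fix any decision set $\outx$ and payoff function $\ell:\Theta\times\outx\to\mathbb{R}$. Then $\mathcal{C}(\exp{F},\outx,\ell)\subseteq\mathcal{C}(\exp{E},\outx,\ell)$. Observe from~\eqref{eq:str-exp-pay} and~\eqref{eq:max-exp-pay} that $\$(\exp{E},\outx,\ell)$ is just the maximum over $\vec{v}\in\mathcal{C}(\exp{E},\outx,\ell)$ of the uniform linear functional $\vec{v}\mapsto\tfrac1{|\Theta|}\sum_\theta v^\theta$. Maximising the same linear functional over a larger set yields a value at least as large, so $\$(\exp{E},\outx,\ell)\ge\$(\exp{F},\outx,\ell)$ and hence $\exp{E}\supset_{\mathrm{Bayes}}\exp{F}$.

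For the reverse implication, I would argue by contrapositive. Suppose $\exp{E}\not\supset\exp{F}$: there exist $\outx$, $\ell$ and an $\outx$-r.d.f.\ $\phi'$ on $\Delta'$ such that $\vec{v}_0:=\vec{v}(\phi';\exp{F},\outx,\ell)$ does not belong to $\mathcal{C}(\exp{E},\outx,\ell)$. Because $\mathcal{C}(\exp{E},\outx,\ell)$ is a closed, bounded, convex subset of $\mathbb{R}^{|\Theta|}$, the separating hyperplane theorem gives a vector $\vec{w}=(w_\theta)_{\theta\in\Theta}\in\mathbb{R}^{|\Theta|}$ and a real number $c$ with
\begin{equation*}
  \vec{w}\cdot\vec{v}_0>c\ge\vec{w}\cdot\vec{v}(\phi;\exp{E},\outx,\ell)\quad\text{for every $\outx$-r.d.f.\ $\phi$ on $\Delta$.}
\end{equation*}

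The key trick is now to absorb the weights $w_\theta$ into a new payoff function. Define $\tilde\ell:\Theta\times\outx\to\mathbb{R}$ by $\tilde\ell(\theta,i):=|\Theta|\,w_\theta\,\ell(\theta,i)$. A direct substitution into~\eqref{eq:str-exp-pay} shows $\mathsf{s}(\exp{E},\outx,\tilde\ell,\phi)=\vec{w}\cdot\vec{v}(\phi;\exp{E},\outx,\ell)$ for every $\phi$, and analogously for $\exp{F}$. Maximising over $\phi$ therefore yields
\begin{equation*}
  \$(\exp{E},\outx,\tilde\ell)\le c<\vec{w}\cdot\vec{v}_0=\mathsf{s}(\exp{F},\outx,\tilde\ell,\phi')\le\$(\exp{F},\outx,\tilde\ell),
\end{equation*}
contradicting the assumed Bayesian ordering. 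Hence $\exp{E}\supset_{\mathrm{Bayes}}\exp{F}$ forces $\exp{E}\supset\exp{F}$.

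The only genuinely non-trivial step is the separating-hyperplane application followed by the reparametrisation of the payoff. It is worth stressing that the components $w_\theta$ of the separating vector may be negative; this is unproblematic only because the definition of a payoff function places no sign restriction on $\ell$, so $\tilde\ell$ is a legitimate payoff. If one insisted on non-negative losses, the same argument would require an additive normalisation, but in the present setting no such subtlety arises.
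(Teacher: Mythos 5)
Your proof is correct and follows essentially the same route as the paper's: the easy direction by monotonicity of the maximum over nested sets, and the hard direction by separating the offending payoff vector from the closed, bounded, convex set $\mathcal{C}(\exp{E},\outx,\ell)$ and absorbing the separating functional's weights into a modified payoff function $\tilde\ell$. The only cosmetic difference is your harmless extra factor of $|\Theta|$ in $\tilde\ell$ and your phrasing of the argument as a contrapositive rather than a chain of equivalences.
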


Since the two orderings $\exp{E}\supset\exp{F}$ and
$\exp{E}\supset_{\mathrm{Bayes}}\exp{F}$ are equivalent, from now on
we will keep only the notation $\exp{E}\supset \exp{F}$, stressing the
fact that the ``Bayesian information ordering'' relation given in
Definition~\ref{def:bayes_order} does not really depend on any strong
Bayesian assumption.

Another partial ordering between statistical models with the same parameter
set $\Theta$ is relevant, and it is defined as
follows, according to~\cite{blackwell}:

\begin{definition}[Sufficiency]\label{def:class_suff}
  The statistical model $\exp{E}=(\Theta,\Delta,\boldsymbol{\alpha})$
  is said to be \emph{sufficient for}
  $\exp{F}=(\Theta,\Delta',\boldsymbol{\beta})$, in formula
  $\exp{E}\succ\exp{F}$, if and only if there exists a
  $|\Delta'|\times|\Delta|$ transition matrix $\mathrm{M}$,
  i.e. a matrix of non-negative numbers $M_{\delta',\delta}$ with
  $\sum_{\delta'\in\Delta'}M_{\delta',\delta}=1$ for all
  $\delta\in\Delta$, for which
  $\vec{q}_\theta=\mathrm{M}\vec{p}_\theta$, for all
  $\theta\in\Theta$.
\end{definition}

The Blackwell-Sherman-Stein (BSS) theorem states the following
important equivalence relation:
\begin{theorem}[BSS Theorem\cite{black1,sherman, stein}]\label{theo:blackwell-orig}
  Given two statistical models
  $\exp{E}=(\Theta,\Delta,\boldsymbol{\alpha})$ and
  $\exp{F}=(\Theta,\Delta',\boldsymbol{\beta})$, governed by the same
  parameter set $\Theta$, $\exp{E}\succ\exp{F}$ if and only if
  $\exp{E}\supset \exp{F}$.
\end{theorem}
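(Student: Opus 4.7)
The plan is to prove the two implications separately, the easy direction being a direct computation and the converse using a hyperplane separation argument combined with Proposition~\ref{prop0}.

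For $\exp{E}\succ\exp{F}\Rightarrow\exp{E}\supset\exp{F}$, given a transition matrix $\mathrm{M}$ realizing $\vec{q}_\theta=\mathrm{M}\vec{p}_\theta$, I would pull back any $\outx$-r.d.f.\ $\phi'$ on $\Delta'$ to the $\outx$-r.d.f.\ $\phi$ on $\Delta$ defined by
\begin{equation*}
t_\phi(i|\delta):=\sum_{\delta'\in\Delta'}t_{\phi'}(i|\delta')\,\mathrm{M}_{\delta',\delta}.
\end{equation*}
Substituting into~\eqref{eq:vn} and interchanging the order of summation gives $\vec{v}(\phi;\exp{E},\outx,\ell)=\vec{v}(\phi';\exp{F},\outx,\ell)$ for every $\outx$ and every payoff $\ell$, so $\mathcal{C}(\exp{E},\outx,\ell)\supseteq\mathcal{C}(\exp{F},\outx,\ell)$, and hence $\exp{E}\supset\exp{F}$.

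For the converse $\exp{E}\supset\exp{F}\Rightarrow\exp{E}\succ\exp{F}$, I would argue by contraposition. Define
\begin{equation*}
\mathcal{A}:=\bigl\{(\mathrm{M}\vec{p}_\theta)_{\theta\in\Theta}\ :\ \mathrm{M}\ \text{is a}\ |\Delta'|\times|\Delta|\ \text{transition matrix}\bigr\}\subseteq\mathds{R}^{|\Theta|\cdot|\Delta'|}.
\end{equation*}
Since the set of transition matrices is a compact convex polytope and the assignment $\mathrm{M}\mapsto(\mathrm{M}\vec{p}_\theta)_\theta$ is linear, $\mathcal{A}$ is compact and convex. The assumption $\exp{E}\not\succ\exp{F}$ means $(\vec{q}_\theta)_\theta\notin\mathcal{A}$, so the Hahn-Banach separation theorem supplies real coefficients $\{\lambda^\theta_{\delta'}\}_{\theta,\delta'}$ with
\begin{equation*}
\sum_{\theta,\delta'}\lambda^\theta_{\delta'}\,q_\theta(\delta')\ >\ \sum_{\theta,\delta'}\lambda^\theta_{\delta'}\,(\mathrm{M}\vec{p}_\theta)(\delta')
\end{equation*}
for every transition matrix $\mathrm{M}$. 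To close the contradiction, I would translate this separation into a statistical decision problem by taking $\outx:=\Delta'$ and $\ell(\theta,\delta'):=\lambda^\theta_{\delta'}$. Since $\outx$-r.d.f.'s $\phi$ on $\Delta$ correspond bijectively to $|\Delta'|\times|\Delta|$ transition matrices via $\mathrm{M}_{\delta',\delta}:=t_\phi(\delta'|\delta)$, the above strict inequality (after dividing by the positive constant $|\Theta|$) becomes $\s(\exp{F},\outx,\ell,\mathrm{id})>\s(\exp{E},\outx,\ell,\phi)$ for every $\outx$-r.d.f.\ $\phi$ on $\Delta$, where $\mathrm{id}$ is the deterministic identity r.d.f.\ $\delta'\mapsto\delta'$ on $\Delta'$. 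Taking the maximum over $\phi$ yields $\$(\exp{F},\outx,\ell)>\$(\exp{E},\outx,\ell)$, which contradicts $\exp{E}\supset_{\mathrm{Bayes}}\exp{F}$ and hence, by Proposition~\ref{prop0}, contradicts $\exp{E}\supset\exp{F}$.

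The main obstacle I anticipate is the clean identification of the separating hyperplane with a bona fide payoff function in the sense of Definition~\ref{def:class_order}. This step relies on two specifically classical features of the setup: payoff functions are allowed to take arbitrary real values, so the sign-free separating coefficients $\lambda^\theta_{\delta'}$ can be used directly as payoffs; and $\outx$-r.d.f.'s on $\Delta$ with $\outx=\Delta'$ stand in natural bijection with $|\Delta'|\times|\Delta|$ transition matrices, so the separating linear functional coincides (up to the irrelevant factor $1/|\Theta|$) with the Bayesian expected payoff on both sides. In the non-commutative extension that occupies the rest of the paper, both of these features will have to be replaced by more delicate analogues, which is what motivates the introduction of statistical morphisms and m-sufficiency.
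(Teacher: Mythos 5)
Your proposal is correct, and it takes a genuinely different route from the paper. The paper never proves Theorem~\ref{theo:blackwell-orig} directly: it states it with citations and then \emph{recovers} it in Section~\ref{sec:black} as a corollary of the general quantum machinery --- Lemma~\ref{lemma:shmaya} (separation on tuples of conditional operators), the construction of a trace-preserving statistical morphism in Theorem~\ref{thm:main} via an informationally complete POVM, the extension Proposition~\ref{prop:class_ext} for abelian output state spaces, and finally the observation that in the doubly abelian case the resulting coarse-graining is a transition matrix. Your argument is the direct, purely classical finite-dimensional proof: the easy direction by pulling back randomized decision functions through the transition matrix (your composed kernel $t_\phi(i|\delta)=\sum_{\delta'}t_{\phi'}(i|\delta')\mathrm{M}_{\delta',\delta}$ is indeed a valid $\outx$-r.d.f., and the payoff vectors coincide exactly); the hard direction by strictly separating the point $(\vec{q}_\theta)_\theta$ from the compact convex polytope $\mathcal{A}$, reading the separating functional as a payoff function on the decision set $\outx=\Delta'$, and invoking Proposition~\ref{prop0} to pass from the Bayesian ordering back to $\supset$. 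All steps check out, including the bijection between $\Delta'$-valued r.d.f.'s on $\Delta$ and column-stochastic matrices and the use of the identity decision function on $\Delta'$ to realize the left-hand side of the separation inequality. What your approach buys is a short, self-contained classical proof that makes transparent exactly which classical features (sign-free payoffs, the r.d.f./transition-matrix bijection) the quantum generalization must replace; what the paper's approach buys is that the same argument, once phrased for tests and statistical morphisms, covers the abelian, semi-classical, and fully quantum cases uniformly --- indeed your separation step is precisely the commutative shadow of Lemma~\ref{lemma:shmaya}. The only cosmetic remark is that your $\mathcal{A}$ lives in $\mathds{R}^{|\Theta|\cdot|\Delta'|}$ while the sets $\mathcal{C}(\cdot,\outx,\ell)$ of Definition~\ref{def:class_order} live in $\mathds{R}^{|\Theta|}$; your reduction to the Bayesian ordering and Proposition~\ref{prop0} handles this correctly, so no gap arises.
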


\section{The formulation in quantum theory}\label{sec:defs}

In what follows, we only consider quantum systems defined on finite
dimensional Hilbert spaces $\sH$. We denote by $\bound(\sH)$ the set
of all linear operators (identified with their representing matrices)
acting on $\sH$, and by $\sS(\sH)$ the set of all density matrices (or
states) $\rho\in\bound(\sH)$, with $\rho\ge 0$ and $\Tr[\rho]=1$. The
identity matrix will be denoted by the symbol $\openone$, whereas the
identity map will be denoted by $\id$.

Most of the concepts used here are introduced and rigorously
formalized in Refs.~\cite{holevo} and~\cite{ozawa}. For reader's
clarity, however, we will report the definitions we need, in a
simplified fashion. According with~\cite{holevo} (see also
Remark~\ref{rem:states}), we adopt the following definition:

\begin{definition}[Quantum Statistical Models]
\label{def:exp}
A \emph{quantum statistical model} is defined by a triple
$\qexp{R}=(\Theta,\sH,\boldsymbol{\rho})$, where $\Theta$ is a
(finite) parameter set, $\sH$ is a (finite dimensional) Hilbert space,
and $\boldsymbol{\rho}=(\rho_\theta;\theta\in\Theta)$ is a family of
density matrices in $\sS(\sH)$. A quantum statistical model $\qexp{R}$
is said to be \emph{abelian} when $[\rho_\theta,\rho_{\theta'}]=0$,
for all $\theta,\theta'\in\Theta$.
\end{definition}

\begin{definition}[POVM's]\label{def:povm}
  For any (finite) decision set $\outx=\{i\}$, a
  \emph{positive-operator--valued $\outx$-measure} ($\outx$-POVM)
  $\povm{P}^\outx$ on the Hilbert space $\sH$ is a family
  $(P^i;i\in\outx)$ of operators $P^i\in\bound(\sH)$, such that
  $P^i\ge 0$ for all $i\in\outx$ and
  $\sum_{i\in\outx}P^i=\openone$. From now on, the superscript $\outx$
  will be dropped when clear from the context.
\end{definition}

\begin{definition}[Quantum Statistical Decision Problems] A
  \emph{quantum statistical decision problem} is defined by a triple
  $(\qexp{R},\outx,\ell)$, where
  $\qexp{R}=(\Theta,\sH,\boldsymbol{\rho})$ is a quantum statistical
  model, $\outx$ is a (finite) decision set $\{i\}_{i\in\outx}$, and
  $\ell:\Theta\times\outx\to\mathbb{R}$ is a payoff function. The
  choice of a \emph{strategy} for the problem
  $(\qexp{R},\outx,\ell)$ corresponds to the choice of a
  $\outx$-POVM $\povm{P}=(P^i;i\in\outx)$ on $\sH$. The corresponding
  expected payoff is computed as
  \begin{equation}
    \s_{\textrm{q}}(\qexp{R},\outx,\ell,\povm{P}):=\frac
    1{|\Theta|}\sum_{\theta\in\Theta}\sum_{i\in\outx}\ell(\theta,i)\Tr[\rho_\theta\
    P^i].
  \end{equation}
  The maximum expected payoff for the decision problem
  $(\qexp{R},\outx,\ell)$ is defined as
\begin{equation}\label{eq:q-exp-payoff}
  \$_{\textrm{q}}(\qexp{R},\outx,\ell):=\max_{\povm{P}}\frac
  1{|\Theta|}\sum_{\theta\in\Theta}\sum_{i\in\outx}\ell(\theta,i)\Tr[\rho_\theta\
  P^i].
\end{equation}
Notice the use of the subscript ``q'', for ``quantum'', to distinguish
the expressions above from their classical analogues appearing
in~(\ref{eq:str-exp-pay}) and~(\ref{eq:max-exp-pay}).
\end{definition}

Given two quantum statistical models
$\qexp{R}=(\Theta,\sH,\boldsymbol{\rho})$ and
$\qexp{S}=(\Theta,\sH',\boldsymbol{\sigma})$, governed by the same
parameter set $\Theta$, but with different Hilbert spaces $\sH$ and
$\sH'$ and different families of quantum states $\boldsymbol{\rho}
=(\rho_\theta\in\sS(\sH);\theta\in\Theta)$ and $\boldsymbol{\sigma}
=(\sigma_\theta\in\sS(\sH');\theta\in\Theta)$, the following partial
ordering is introduced:

\begin{definition}[Information Ordering]\label{def:qexp_info_order}
  A quantum statistical model
  $\qexp{R}=(\Theta,\sH,\boldsymbol{\rho})$ is said to be \emph{always
    more informative than}
  $\qexp{S}=(\Theta,\sH',\boldsymbol{\sigma})$, in formula
  $\qexp{R}\supset\qexp{S}$, if and only if, for every finite decision
  set $\outx$ and every payoff function
  $\ell:\Theta\times\outx\to\mathbb{R}$,
  $\$_{\textrm{q}}(\qexp{R},\outx,\ell)\ge
  \$_{\textrm{q}}(\qexp{S},\outx,\ell)$.
\end{definition}

In other words, $\qexp{R}$ is said to be more informative than
$\qexp{S}$ if every expected payoff attainable in the problem
$(\qexp{S},\outx,\ell)$ is also attainable in the problem
$(\qexp{R},\outx,\ell)$.

\begin{remark}
  We stress once more that the information ordering relation between
  quantum statistical models introduced above does not depend on the
  \emph{a priori} distribution on $\Theta$ used to compute the
  expected payoff~(\ref{eq:q-exp-payoff}). One could in fact adopt an
  information ordering for quantum statistical models analogous to
  that introduced in Definition~\ref{def:class_order}, and prove that
  the two ordering relations are equivalent. This is due to the fact
  that Proposition~\ref{prop0} is valid also for quantum statistical
  models.
\end{remark}

\begin{remark}[Quantum-Classical
  Correspondence]\label{rem:quantum-vs-classical}
  Given an abelian quantum statistical model
  $\qexp{R}=(\Theta,\sH,\boldsymbol{\rho})$, it is always possible to
  construct, from $\qexp{R}$, a (classical) statistical model
  $\exp{E}_{\qexp{R}}=(\Theta,\Delta_{\sH},\boldsymbol{\alpha}_{\boldsymbol{\rho}})$
  that is completely equivalent to $\qexp{R}$, in the sense that, for
  every finite decision set $\outx$, every payoff function
  $\ell:\Theta\times\outx\to\mathbb{R}$, and every $\outx$-POVM
  $\povm{P}$ on $\sH$, there exists a $\outx$-r.d.f. $\phi$ on
  $\Delta_\sH$ such that $\s(\exp{E}_{\qexp{R}},\outx,\ell,\phi)=
  \s_{\textrm{q}}(\qexp{R},\outx,\ell,\povm{P})$. Such a
  correspondence is obtained by first introducing a sample set
  $\Delta_\sH=\{\delta\}$ with $|\Delta_\sH|=\dim\sH$, so that any
  orthonormal basis for $\sH$ can be indexed by $\Delta_\sH$. Then,
  since all density matrices $\rho_\theta$ are pairwise commuting, an
  orthonormal basis
  $\{|\varphi_\delta\>\in\sH\}_{\delta\in\Delta_\sH}$ for $\sH$
  exists, with respect to which all $\rho_\theta$ are simultaneously
  diagonal. Finally, the family of probability distributions
  $\boldsymbol{\alpha}_{\boldsymbol{\rho}}=(p_\theta;\theta\in\Theta)$
  on $\Delta_\sH$ is defined according to the relation
  $p_\theta(\delta):=\<\varphi_\delta|\rho_\theta|\varphi_\delta\>$,
  for all $\delta\in\Delta_\sH$ and $\theta\in\Theta$. Then, it is
  easy to check that the statistical model
  $\exp{E}_{\qexp{R}}=(\Theta,\Delta_\sH,\boldsymbol{\alpha}_{\boldsymbol{\rho}})$,
  obtained in this way from $\qexp{R}=(\Theta,\sH,\boldsymbol{\rho})$,
  is completely equivalent to the initial quantum statistical model
  $\qexp{R}$, in the sense explained above. This in particular implies
  that, for every finite decision set $\outx$ and every payoff function
  $\ell:\Theta\times\outx\to\mathbb{R}$,
  $\$(\exp{E}_{\qexp{R}},\outx,\ell)=
  \$_{\textrm{q}}(\qexp{R},\outx,\ell)$.

  Conversely, given a (classical) statistical model
  $\exp{E}=(\Theta,\Delta,\boldsymbol{\alpha})$, it is always possible
  to construct an equivalent abelian quantum statistical model
  $\qexp{R}_{\exp{E}}=(\Theta,\sH_{\Delta},\boldsymbol{\rho}_{\boldsymbol{\alpha}})$,
  by introducing a Hilbert space $\sH_\Delta$, with
  $\dim\sH_\Delta=|\Delta|$, and a family
  $\boldsymbol{\rho}_{\boldsymbol{\alpha}}=(\rho_\theta;\theta\in\Theta)$
  of diagonal density matrices on $\sH_\Delta$, defined by the
  relation
  $\rho_\theta=\sum_{\delta\in\Delta}p_\theta(\delta)|\varphi_\delta\>\<\varphi_\delta|$,
  where $\{|\varphi_{\delta}\>\in\sH_\Delta\}_{\delta\in\Delta}$ is
  any orthonormal basis for $\sH_\Delta$. Also in this case, it is
  easy to check that the quantum statistical model
  $\qexp{R}_{\exp{E}}=(\Theta,\sH_{\Delta},\boldsymbol{\rho}_{\boldsymbol{\alpha}})$,
  obtained in this way from
  $\exp{E}=(\Theta,\Delta,\boldsymbol{\alpha})$, is completely
  equivalent to the initially given statistical model $\exp{E}$, in
  the sense that, for every finite decision set $\outx$, every payoff
  function $\ell:\Theta\times\outx\to\mathbb{R}$, and every
  $\outx$-r.d.f. $\phi$ on $\Delta$, there exists a $\outx$-POVM
  $\povm{P}$ on $\sH_\Delta$ such that
  $\s_{\textrm{q}}(\qexp{R}_{\exp{E}},\outx,\ell,\povm{P})=
  \s(\exp{E},\outx,\ell,\phi)$. In particular,
  $\$_{\textrm{q}}(\qexp{R}_{\exp{E}},\outx,\ell)=
  \$(\exp{E},\outx,\ell)$.
\end{remark}

These ideas can be compactly re-expressed as follows:

\begin{postulate}[Correspondence principle]\label{post:one}
  Classical statistical models are identified with abelian
  quantum statistical models, and viceversa.
\end{postulate}

A quantum statistical model $\qexp{R}$ involves a parameter set
$\Theta$ and a Hilbert space $\sH$. In a sense, then, a quantum
statistical model constitutes an asymmetric structure, where a quantum
system carries information about a classical parameter. It is useful
hence to provide a notion for a ``fully quantum'' information
structure. This can be done as follows: given a finite parameter set
$\Theta$, let $\sH_\Theta$ be a Hilbert space such that
$\dim\sH_\Theta=|\Theta|$, i.e., such that that there exists a
complete set of orthonormal vectors
$\{|\varphi_\theta\>\in\sH_\Theta\}_{\theta\in \Theta}$, labeled by
$\theta$, which form a basis for $\sH_\Theta$. For the sake of
notation, let us denote $|\varphi_\theta\>$ simply by $|\theta\>$.
Then, each quantum model $\qexp{R}=(\Theta,\sH,\boldsymbol{\rho})$
defines a corresponding bipartite quantum state
\begin{equation}\label{eq:bip-exp}
  \rho_{AB}^{\qexp{R}}:=\frac 1{|\Theta|}\sum_{\theta\in\Theta}|\theta\>\<\theta|_A\otimes\rho^\theta_B,
\end{equation}
where $\sH_A\cong\sH_\Theta$, $\sH_B\cong\sH$, and
$\rho^\theta_B\equiv \rho_\theta$. The particular
``classical-quantum'' structure of the state given
in~(\ref{eq:bip-exp}) reflects the above mentioned ``hybrid''
structure of a quantum statistical model. Instead, by allowing
$\rho_{AB}$ to be an arbitrary bipartite state, we arrive at the
following definition:

\begin{definition}[Quantum Information Structures~\cite{shmaya}]
  A \emph{quantum information structure} $\boldsymbol{\varrhoup}_{AB}$
  is defined as a triple $(\sH_A,\sH_B,\rho_{AB})$, where $\sH_A$ and
  $\sH_B$ are finite dimensional Hilbert spaces, and
  $\rho_{AB}\in\sS(\sH_A\otimes\sH_B)$.
\end{definition}

The notion of quantum information structure is hence the ``fully
quantized'' analogue of a quantum statistical model. In the same way
in which a quantum statistical model can be used to define a quantum
statistical decision problem, a quantum information structure can be
used to define a quantum game\footnote{In the very specific sense
  given in Ref.~\cite{shmaya}.} as follows:

\begin{definition}[Quantum Statistical Decision Games~\cite{shmaya}]
  A \emph{quantum statistical decision game} is defined as a triple
  $(\boldsymbol{\varrhoup}_{AB},\outx,\mathrm{O}_A^\outx)$, where
  $\boldsymbol{\varrhoup}_{AB}=(\sH_A,\sH_B,\rho_{AB})$ is a quantum
  information structure, $\outx$ is a (finite) decision set
  $\{i\}_{i\in\outx}$, and $\mathrm{O}_A^\outx$ is a family
  $(O^i_A;i\in\outx)$ of self-adjoint \emph{payoff operators}
  $O^i_A\in\bound(\sH_A)$. (From now on, the superscript $\outx$ in
  $\mathrm{O}_A^\outx$ will be dropped when clear from the context.)
  The choice of a \emph{strategy} for player $B$ corresponds to the
  choice of a POVM $\povm{P}_B=(P^i_B;i\in\outx)$ on $\sH_B$. The
  corresponding expected payoff is computed as
\begin{equation}
  \s_{\textrm{q}}(\boldsymbol{\varrhoup}_{AB},\outx,\mathrm{O}_A,\povm{P}_B):=\sum_{i\in\outx}\Tr\left[\left(O^i_A\otimes
      P^i_B\right)\rho_{AB}\right].
\end{equation}
The maximum expected payoff is given by
\begin{equation}\label{eq:q-payoff}
  \$_{\textrm{q}}(\boldsymbol{\varrhoup}_{AB},\outx,\mathrm{O}_A):=\max_{\povm{P}_B}
  \sum_{i\in\outx}\Tr\left[\left(O^i_A\otimes
      P^i_B\right)\rho_{AB}\right].
\end{equation}
\end{definition}

The following definition was introduced in~\cite{shmaya} as a very
natural analogue of Definition~\ref{def:bayes_order}:

\begin{definition}[Information Ordering]\label{def:info-ord-qis}
  Given two quantum information structures
  $\boldsymbol{\varrhoup}_{AB}=(\sH_A,\sH_B,\rho_{AB})$ and
  $\boldsymbol{\varsigmaup}_{AB'}=(\sH_A,\sH_{B'},\sigma_{AB'})$, $\boldsymbol{\varrhoup}_{AB}$ is
  said to be \emph{always more informative than} $\boldsymbol{\varsigmaup}_{AB'}$, in
  formula,
\begin{equation}
  \boldsymbol{\varrhoup}_{AB}\supset_A\boldsymbol{\varsigmaup}_{AB'},
\end{equation}
if and only if, for every finite decision set $\outx$ and every family
of self-adjoint payoff operators
$\mathrm{O}_A=(O^i_A;i\in\outx)$ on $\sH_A$,
\begin{equation}
  \$_{\textrm{q}}(\boldsymbol{\varrhoup}_{AB},\outx,\mathrm{O}_A)\ge \$_{\textrm{q}}(\boldsymbol{\varsigmaup}_{AB'},\outx,\mathrm{O}_A).
\end{equation}
\end{definition}

\begin{remark}\label{rem:from-str-to-qmod}
  In analogy with Remark~\ref{rem:quantum-vs-classical}, here we note
  that any quantum information structure
  $\boldsymbol{\varrhoup}_{AB}=(\sH_A,\sH_B,\rho_{AB})$, for which a
  decomposition like that in Eq.~(\ref{eq:bip-exp}) exists, naturally
  induces a corresponding quantum statistical model
  $\qexp{R}_{\boldsymbol{\varrhoup}}=(\Theta,\sH_B,(\rho^\theta_B;\theta\in\Theta))$,
  where the states $\rho^\theta_B$ are those appearing
  in~(\ref{eq:bip-exp}). Moreover, any quantum statistical decision
  game $(\boldsymbol{\varrhoup}_{AB},\outx,\mathrm{O}_A)$ built upon
  such a classical-quantum structure $\boldsymbol{\varrhoup}_{AB}$ is
  completely equivalent to a quantum statistical decision problem
  $(\qexp{R}_{\boldsymbol{\varrhoup}},\outx,\ell_{\mathrm{O}})$, in
  the sense that
  $\$_{\textrm{q}}(\boldsymbol{\varrhoup}_{AB},\outx,\mathrm{O}_A)=
  \$_{\textrm{q}}(\qexp{R}_{\boldsymbol{\varrhoup}},\outx,\ell_{\mathrm{O}})$,
  where the payoff function $\ell_{\mathrm{O}}$ is defined by
  $\ell(\theta,i):=\<\theta_A|O^i_A|\theta_A\>$, with the vectors
  $|\theta_A\>$ being the same as in~(\ref{eq:bip-exp}).
\end{remark}

For the reader's convenience, we end this section by summarizing the
contents of Remarks~\ref{rem:quantum-vs-classical}
and~\ref{rem:from-str-to-qmod} as follows:
\begin{enumerate}
\item the most general notion is that of quantum statistical decision
  games over quantum information structures;
\item quantum statistical decision problems over quantum statistical
  models are equivalent to quantum statistical decision games over
  hybrid classical-quantum information structures;
\item classical statistical decision problems over statistical
  models are equivalent to quantum decision problems over abelian
  quantum statistical models.
\end{enumerate}
In other words, quantum information structures contain quantum
statistical models (as hybrid structures), which, in turn, contain
classical statistical models (as abelian models). For this reason we
will first formulate our results for quantum information structures:
quantum statistical models and classical statistical models will be
considered afterwards, as particular cases.

\section{Sufficiency conditions for quantum information structures and
  statistical morphisms}\label{sec:proc}

In the previous section we extended the notion of information ordering
to quantum statistical models and quantum information structures,
depending on their ``information value'' in quantum statistical
decision problems and quantum statistical decision games,
respectively. In the following we will carefully define what it means
that a quantum information structure is sufficient for another. In
order to do this, we will need to consider a formalism which is
slightly more general than the one we used before.

We begin with the following definition:

\begin{definition}[State Spaces]
  The \emph{state space} $\sS$ of a quantum system defined on a
  Hilbert space $\sH$ is a non-empty subset of $\sS(\sH)$, containing
  all possible physical states of the system.
\end{definition}

\begin{remark} Usually, the state space $\sS$ coincides with the set
  $\sS(\sH)$ of all possible density matrices acting on
  $\sH$. However, there are cases in which the states accessible to
  the system form a proper subset of $\sS(\sH)$, for example, when the
  system is known to obey a conservation law. For later convenience,
  we keep our definition of state space as general as possible. This
  is also the reason why, in the above definition, there is no
  assumption about the convexity of the state space, as we do not need
  such assumption in general (even though, in many physically relevant
  situations, that would seem rather natural).
\end{remark}

\begin{definition}[Effects and Tests]\label{def:effects}
  An operator $X\in\bound(\sH)$ is called an \emph{effect} on a state
  space $\sS$ (defined on $\sH$) if and only if there exists an
  operator $P\in\bound(\sH)$, with $0\le P\le\openone$, such that
  $\Tr[X\rho]=\Tr[P\rho]$, for all $\rho\in\sS$.

  For any (finite) decision set $\outx=\{i\}$, a family
  $(M^i;i\in\outx)$ of operators $M^i\in\bound(\sH)$ is called a
  \emph{$\outx$-test} $\test{M}^\outx$ on a state space $\sS$ (defined
  on $\sH$) if and only if there exists a $\outx$-POVM
  $\povm{P}^\outx=(P^i;i\in\outx)$ on $\sH$ with $\Tr[M^i\
  \rho]=\Tr[P^i\ \rho]$, for all $i\in\outx$ and for all
  $\rho\in\sS$. Any such POVM $\povm{P}^\outx$ is said to realize the
  test $\test{M}^\outx$ on $\sS$. From now on, the superscript $\outx$
  will be dropped when clear from the context.
\end{definition}

\begin{remark}\label{rem:stat-equivalence}
  For a given state space $\sS$ and a given decision set $\outx$, two
  families $\test{M}=(M^i;i\in\outx)$ and $\test{N}=(N^i;i\in\outx)$
  of operators in $\bound(\sH)$ are statistically equivalent on $\sS$,
  in formula $\test{M}\sim_\sS\test{N}$, if and only if $\Tr[M^i\
  \rho]=\Tr[N^i\ \rho]$, for all $i\in\outx$ and all $\rho\in\sS$. For
  any family $\test{M}=(M^i;i\in\outx)$, let $[\test{M}]_\sS$ be the
  corresponding equivalence class induced by $\sim_\sS$. Any
  $\outx$-test on $\sS$ can hence be thought of as the equivalence
  class of some $\outx$-POVM on $\sH$.
\end{remark}

\begin{remark}\label{rem:imp-conter}
  A second, more physically motivated way to think of tests is the
  following: $\outx$-tests on a state space $\sS$ are those affine
  mappings, from $\sS$ to probability distributions on $\outx$, which
  can be \emph{physically realized} as quantum measurements. This is
  guaranteed by requiring, in the definition of test, the existence of
  at least one POVM that is statistically equivalent to it: in fact,
  all physically realizable quantum measurements give rise to a POVM,
  and any POVM can be physically measured~\cite{ozawa-instr}. Such a
  restriction in the definition of tests is meaningful only if there
  exist cases of affine mappings from a state space $\sS$ to
  probability distributions on a decision set $\outx$, which cannot be
  realized by any POVM. If the state space is the totality of states
  $\sS(\sH)$, then, all such affine mappings are in one-to-one
  correspondence with POVM's, and there is no need to introduce
  further definitions. However, in the general case in which
  $\sS\subset\sS(\sH)$, the distinction between tests and
  ``unphysical'' affine mappings become relevant, and
  Definition~\ref{def:effects} is necessary.
\end{remark}

We are now in the position to rigorously introduce the idea which will
be the basis of our analysis:

\begin{definition}[Statistical Morphisms]\label{def:process}
  Given two state spaces $\sS_{\insym}$ (defined on a Hilbert space
  $\sH_{\insym}$) and $\sS_{\outsym}$ (defined on a Hilbert space
  $\sH_{\outsym}$), we say that a linear map
  $\mL:\bound(\sH_{\insym})\to\bound(\sH_{\outsym})$ induces a
  \emph{statistical morphism} $\mL:\sS_{\insym}\to\sS_{\outsym}$ if
  and only if the following conditions are both satisfied:
  \begin{enumerate}
  \item for every $\rho\in\sS_{\insym}$, $\mL(\rho)\in \sS_{\outsym}$;
\item the dual transformation
  $\mL^*:\bound(\sH_{\outsym})\to\bound(\sH_{\insym})$, defined by
  trace duality\footnote{For any operator $X\in\bound(\sH_{\outsym})$,
    $\mL^*(X)\in\bound(\sH_{\insym})$ is defined by the relation
    $\Tr[\mL^*(X)\ Y]=\Tr[X\ \mL(Y)]$, for every
    $Y\in\bound(\sH_{\insym})$.}, maps tests on $\sS_{\outsym}$ into
  tests on $\sS_{\insym}$.
\end{enumerate}
\end{definition}

\begin{remark}\label{rem:pos-morph}
  Notice that the notion of statistical morphism, introduced in
  Definition~\ref{def:process}, is in principle strictly weaker than
  the notion of positive map, which is a linear map that transforms
  positive operators into positive operators. In fact, given a
  positive operator $P\le\openone$ on $\sH_{\outsym}$, the operator
  $\mL^*(P)$ might have negative eigenvalues, and yet be an effect on
  $\sS_{\insym}$, according to Definition~\ref{def:effects}. On the
  contrary, a linear, trace-preserving, positive map from
  $\bound(\sH_{\insym})$ to $\bound(\sH_{\outsym})$ always constitutes
  a statistical morphism. An open question is whether any statistical
  morphism can always be extended to a positive map. Indications that
  this might not be true in general are provided in
  Ref.~\cite{matsumoto}, Corollary~10. However, at the moment of
  writing, an explicit counterexample is not available.
\end{remark}


\begin{definition}\label{def:localss}
  Given a quantum information structure
  $\boldsymbol{\varrhoup}_{AB}=(\sH_A,\sH_B,\rho_{AB})$, the associated
  state space $\sS_B(\boldsymbol{\varrhoup}_{AB})\subseteq\sS(\sH_B)$ of
  physical states of the subsystem $B$ is defined as
  \begin{equation}
    \sS_B(\boldsymbol{\varrhoup}_{AB}):=\left\{\left.\frac{\Tr_A[(P_A\otimes
          \openone_B)\rho_{AB}]}{\Tr[(P_A\otimes
          \openone_B)\rho_{AB}]}\right|P_A\in\bound(\sH_A):0\le P_A\le\openone_A\right\}.
\end{equation}
\end{definition}

\begin{remark}
  The state spaces associated with a given information structures turn
  out to be convex state spaces. This can be easily verified by direct
  inspection.
\end{remark}

\begin{remark}\label{rem:povm-to-tests}
  From now on, it is convenient to think that, in
  Eq.~(\ref{eq:q-payoff}), the maximum over POVM's $\povm{P}_B$ on
  $\sH_B$ is replaced by a maximum over tests $\test{M}_B$ on
  $\sS_B(\boldsymbol{\varrhoup}_{AB})$. Such a replacement, which is
  formally convenient, is quantitatively irrelevant, since it does not
  affect the value of the maximum expected payoff, nor it modifies the
  information ordering relation introduced in
  Definition~\ref{def:info-ord-qis}.
\end{remark}

We are now able to rigorously define the notion of sufficiency (in a
sense analogous to the one used by Blackwell in~\cite{blackwell}) for
quantum information structures, in its two variants: sufficiency and
m-sufficiency.

\begin{definition}[Sufficiency and m-sufficiency]\label{def:suff}
  Given two quantum information structures
  $\boldsymbol{\varrhoup}_{AB}=(\sH_A,\sH_B,\rho_{AB})$ and
  $\boldsymbol{\varsigmaup}_{AB'}=(\sH_A,\sH_{B'},\sigma_{AB'})$, we
  say that $\boldsymbol{\varrhoup}_{AB}$ is \emph{m-sufficient for}
  $\boldsymbol{\varsigmaup}_{AB'}$, in formula
\begin{equation}
  \boldsymbol{\varrhoup}_{AB}\msucc\boldsymbol{\varsigmaup}_{AB'},
\end{equation}
if and only if there exists a statistical morphism
$\mL_B:\sS_B(\boldsymbol{\varrhoup}_{AB})\to\sS_{B'}(\boldsymbol{\varsigmaup}_{AB'})$
such that
\begin{equation}
\sigma_{AB'}=(\id_A\otimes\mL_B)(\rho_{AB}).
\end{equation}
We say that $\boldsymbol{\varrhoup}_{AB}$ is \emph{sufficient for}
$\boldsymbol{\varsigmaup}_{AB'}$, in formula
\begin{equation}
  \boldsymbol{\varrhoup}_{AB}\succ\boldsymbol{\varsigmaup}_{AB'},
\end{equation}
if and only if there exists a completely positive, trace-preserving
map $\mE_B:\bound(\sH_B)\to\bound(\sH_{B'})$ such that
\begin{equation}
\sigma_{AB'}=(\id_A\otimes\mE_B)(\rho_{AB}).
\end{equation}
\end{definition}

Intuitively speaking, the idea of sufficiency is related with the fact
that the transformation can be actually performed \emph{physically},
as an open evolution. On the contrary, the notion of m-sufficiency
introduced here just assumes the existence of a \emph{formal}
statistical procedure to map one strategy into another.

\section{Extension theorems for statistical morphisms}\label{sec:ext}

Even if the notion of statistical morphism is weaker than that of
positive map, two famous extension theorems for positive maps, proved
by Choi~\cite{choi} and Arveson~\cite{arveson}, can be generalized to
statistical morphisms as well.

\begin{definition}[Complete State Spaces]\label{def:complete}
  A state space $\sS$ on $\sH$ is called \emph{complete} for
  $\bound(\sH)$ if and only if it contains $(\dim\sH)^2$ linearly
  independent density matrices.
\end{definition}

\begin{definition}[Composition of State Spaces]\label{def:compos}
  Given two state spaces $\sS_{\alpha}$ (on $\sH_{\alpha}$) and
  $\sS_{\beta}$ (on $\sH_{\beta}$), we define the set
  \begin{equation}
    \sS_{\alpha}\times\sS_{\beta}:=\left\{\sigma_{\alpha}\otimes\tau_{\beta}\left|\sigma_{\alpha}\in\sS_{\alpha},\tau_{\beta}\in\sS_{\beta}\right.\right\}.
  \end{equation}
  An operator $X\in\bound(\sH_\alpha\otimes\sH_\beta)$ is an effect on
  $\sS_{\alpha}\times\sS_{\beta}$ if and only if there exists an
  operator $P\in\bound(\sH_\alpha\otimes\sH_\beta)$, $0\le
  P\le\openone_{\alpha}\otimes\openone_{\beta}$, such that $\Tr[X\
  (\sigma_{\alpha}\otimes\tau_{\beta})] =\Tr[P\
  (\sigma_{\alpha}\otimes\tau_{\beta})]$, for all
  $\sigma_{\alpha}\in\sS_{\alpha}$ and
  $\tau_{\beta}\in\sS_{\beta}$. In the same way we extend the notion
  of tests. Notice that effects or tests on
  $\sS_{\alpha}\times\sS_{\beta}$ need not be factorized.
\end{definition}

\begin{proposition}\label{prop:ent_ext}
  Given two state spaces $\sS_{\insym}$ and $\sS_{\outsym}$, defined
  on $\sH_{\insym}$ and $\sH_{\outsym}$, respectively, and a third
  auxiliary complete state space $\sS_0$, defined on $\sH_0\cong
  \sH_{\outsym}$, suppose that the linear map
  $\id\otimes\mL:\bound(\sH_0)\otimes\bound(\sH_{\insym})
  \to\bound(\sH_0)\otimes\bound(\sH_{\outsym})$ induces a statistical
  morphism from $\sS_0\times\sS_{\insym}$ to
  $\sS_0\times\sS_{\outsym}$. Then, there exists a completely
  positive, trace-preserving map $\mE:\bound(\sH_{\insym})\to
  \bound(\sH_{\outsym})$ such that
  \begin{equation}
    \mL(\sigma)=\mE(\sigma),
  \end{equation}
  for all $\sigma\in\sS_{\insym}$.
\end{proposition}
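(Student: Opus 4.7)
The plan is to imitate Choi's proof that an appropriate positivity condition for $\id\otimes\mL$ forces $\mL$ to be completely positive, with two adaptations dictated by the statistical setting: ordinary positivity of the Choi matrix is replaced by the weaker condition that a certain Choi-type operator is \emph{statistically equivalent} to a positive operator on $\sS_0\times\sS_\insym$, and the completeness of $\sS_0$ provides the tomographic power needed to upgrade the latter equivalence to a pointwise matrix identity. The identification $\sH_0\cong\sH_\outsym$ is precisely what allows us to form a maximally entangled vector on $\sH_0\otimes\sH_\outsym$, which is the natural ``probe'' dual to the Choi construction.

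Concretely, I would fix an orthonormal basis $\{|k\>\}_k$ of $\sH_0\cong\sH_\outsym$ and form $|\Phi\>=d^{-1/2}\sum_k|k\>_0|k\>_\outsym$. The two-outcome POVM $\{|\Phi\>\<\Phi|,\openone-|\Phi\>\<\Phi|\}$ on $\sH_0\otimes\sH_\outsym$ realizes a test on $\sS_0\times\sS_\outsym$, so the hypothesis on $\id_0\otimes\mL$, together with Definition~\ref{def:process}, yields a POVM $\{Q,\openone-Q\}$ on $\sH_0\otimes\sH_\insym$ satisfying
\[
\Tr[Q(\sigma_0\otimes\tau)]=\Tr\bigl[|\Phi\>\<\Phi|(\sigma_0\otimes\mL(\tau))\bigr]\qquad\forall\sigma_0\in\sS_0,\ \tau\in\sS_\insym,
\]
with of course $0\le Q\le\openone$. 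Because $\sS_0$ is complete, this identity extends by complex linearity to arbitrary $X\in\bound(\sH_0)$; decomposing $Q=\sum_{ij}|i\>\<j|_0\otimes Q^{ij}$ and setting $X=|j\>\<i|_0$ yields the matrix-entry formula $\<j|\mL(\tau)|i\>_\outsym=d\,\Tr[Q^{ij}\tau]$ for every $\tau\in\sS_\insym$ and all $i,j$. I then define the linear map $\mE:\bound(\sH_\insym)\to\bound(\sH_\outsym)$ by enforcing this identity for arbitrary arguments, i.e.\ $\<j|\mE(X)|i\>_\outsym:=d\,\Tr[Q^{ij}X]$ extended by linearity. A short bookkeeping computation shows that the Choi matrix of $\mE$ is nothing but $dQ$ up to the canonical swap $\sH_0\otimes\sH_\insym\simeq\sH_\insym\otimes\sH_\outsym$ and a partial transposition, both of which preserve positivity; hence the Choi matrix of $\mE$ is positive and $\mE$ is completely positive. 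Agreement with $\mL$ on $\sS_\insym$ is built into the construction.

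The main obstacle is trace preservation. Setting $X=\openone_0$ in the extended identity and using $\Tr_0|\Phi\>\<\Phi|=d^{-1}\openone_\outsym$ gives $\Tr[(d\,\Tr_0 Q)\tau]=\Tr[\tau]$ for all $\tau\in\sS_\insym$, so $\mE^*(\openone)=d\,\Tr_0 Q$ acts as $\openone_\insym$ on $\sS_\insym$; this already yields the required trace preservation whenever $\sS_\insym$ is itself complete. In general, I would close the argument either by exploiting the nonuniqueness of $Q$---only its equivalence class on $\sS_0\times\sS_\insym$ is fixed, so a convex-feasibility argument should allow one to impose $d\,\Tr_0 Q=\openone_\insym$ globally without destroying the bounds $0\le Q\le\openone$---or, more cleanly, by first invoking the paper's companion (Arveson-type) extension theorem to enlarge $\mL$ to a statistical morphism on the full state space $\sS(\sH_\insym)$, at which point the Choi construction above produces a map $\mE$ that automatically coincides with the extended $\mL$ on a spanning set of states and is therefore both completely positive and trace preserving.
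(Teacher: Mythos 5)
Your construction of a completely positive map agreeing with $\mL$ on $\sS_{\insym}$ is correct: the matrix-element formula $\<j|\mL(\tau)|i\>=d\,\Tr[Q^{ij}\tau]$ does follow from the single Bell-projector test plus the completeness of $\sS_0$, and the resulting $\mE$ has Choi operator $dQ$ up to a swap and a transpose, hence is CP. This is in essence a Choi-matrix reformulation of what the paper does. But you have correctly located the gap yourself, and neither of your proposed repairs closes it. Repair (a) is circular: asking for a positive $Q'$ with the same $\sS_{\insym}$-marginals as $Q$ \emph{and} with $d\,\Tr_0 Q'=\openone_{\insym}$ is, via the Choi correspondence, exactly the assertion that a CPTP map agreeing with $\mL$ on $\sS_{\insym}$ exists --- i.e.\ it is the proposition being proved, and the existence of one feasible point for the relaxed problem (your $Q$, with $0\le Q\le\openone$ but no partial-trace constraint) does not imply feasibility of the constrained one. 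Repair (b) does not apply: the companion extension result (Proposition~\ref{prop:class_ext}) requires the \emph{output} state space to be abelian, which is not assumed here, and no result in the paper extends a statistical morphism to the full input state space $\sS(\sH_{\insym})$; indeed the hypotheses place no constraint whatsoever on $\mL$ off the linear span of $\sS_{\insym}$, so no such extension of $\mL$ itself can be inferred.

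The missing ingredient is to use the \emph{entire} $d^2$-outcome Bell POVM $(B^i)_{i=1}^{d^2}$ as a single test, not just one projector. Its pullback under $(\id\otimes\mL)^*$ is then statistically equivalent on $\sS_0\times\sS_{\insym}$ to a genuine POVM $(\tilde B^i)_{i=1}^{d^2}$ satisfying $\sum_i\tilde B^i=\openone$, and the teleportation identity
\begin{equation*}
\mL(\sigma)=\sum_{i=1}^{d^2}\Tr_{\beta\gamma}\!\left[\bigl(U^i\otimes\openone\bigr)\bigl(\openone\otimes \tilde B^i\bigr)\bigl(\Psi^+\otimes\sigma\bigr)\bigl((U^i)^{\dag}\otimes\openone\bigr)\right]
\end{equation*}
assembles the $d^2$ corresponding Choi blocks, with their unitary corrections, into a map whose right-hand side extends by linearity to all of $\bound(\sH_{\insym})$ and is manifestly completely positive \emph{and} trace-preserving, precisely because $\sum_i\tilde B^i=\openone$. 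Your single-projector argument retains only one of these $d^2$ blocks and thereby discards the normalization information that delivers trace preservation for free.
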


\begin{proof}
  Let $(B^i)_{i=1}^{d^2}$, where $d=\dim\sH_0=\dim\sH_{\outsym}$, be
  the POVM consisting of the $d^2$ generalized Bell projectors acting
  on $\sH_0\otimes\sH_{\outsym}$. By trace-duality:
  \begin{equation}
    \Tr\left[B^i(\omega\otimes\mL(\sigma))\right]=\Tr\left[(\id\otimes\mL^*)(B^i)\
      (\omega\otimes\sigma)\right],
  \end{equation}
  for all $\sigma\in\sS_{\insym}$ and all $\omega\in\sS_0$. The fact
  that $\id\otimes\mL$ is a statistical morphism implies, by
  definition, that the operators
  $((\id\otimes\mL^*)(B^i))_{i=1}^{d^2}$, even if not positive, yet
  induce a test on $\sS_0\times\sS_{\insym}$. In other words, there
  exists a POVM $(\tilde B^i)_{i=1}^{d^2}$ on
  $\sH_0\otimes\sH_{\insym}$ such that
    \begin{equation}\label{eq:asd}
      \Tr\left[(\id\otimes\mL^*) (B^i)\
        (\omega\otimes\sigma)\right]=\Tr\left[\tilde B^i\ (\omega\otimes\sigma)\right],
    \end{equation}
    for all $\sigma\in\sS_{\insym}$, all $\omega\in\sS_0$, and
    every $i$. Due to the assumption that $\sS_0$ is complete, there
    always exist $d^2$ states in $\sS_0$ which form an operator
    basis for $\bound(\sH_0)$. We can then extend Eq.~(\ref{eq:asd}) by
    linearity and obtain that, in fact,
\begin{equation}
  \Tr\left[B^i\ (X\otimes\mL(\sigma))\right]=\Tr\left[\tilde B^i\ (X\otimes\sigma)\right],
    \end{equation}
    for all $\sigma\in\sS_{\insym}$, all $X\in\bound(\sH_0)$, and
    every $i$.

    Using the POVM $(\tilde B^i)_{i=1}^{d^2}$ (whose existence we
    proved above), we now consider the identity (via teleportation):
\begin{equation}\label{eq:telep}
\begin{split}
&\mL(\sigma)\\
=&\sum_{i=1}^{d^2}\Tr_{\beta\gamma}\left[\left(U^i_{\alpha}\otimes\openone_{\beta\gamma}\right)\left(\openone_{\alpha}\otimes
    B^i_{\beta\gamma}\right)\left(\Psi^+_{\alpha\beta}\otimes\mL_{\gamma}(\sigma_{\gamma})\right)\left(
  (U^i_{\alpha})^{\dag}\otimes\openone_{\beta\gamma}\right)\right]\\
=&\sum_{i=1}^{d^2}\Tr_{\beta\gamma}\left[\left(U^i_{\alpha}\otimes\openone_{\beta\gamma}\right)\left(\openone_{\alpha}\otimes
  \tilde B^i_{\beta\gamma}\right)\left(\Psi^+_{\alpha\beta}\otimes\sigma_{\gamma}\right)\left(
  (U^i_{\alpha})^{\dag}\otimes\openone_{\beta\gamma}\right)\right],
\end{split}
\end{equation}
where $\Psi^+=d^{-1}\sum_{i,j=1}^d|i\>\<j|\otimes|i\>\<j|$ is a
maximally entangled state on $\sH_0^{\otimes 2}$ and
$(U^i)_{i=1}^{d^2}$ is an appropriate family of unitary matrices on
$\sH_0$.  The relation above holds for all
$\sigma\in\sS_{\insym}\subseteq\sS(\sH_{\insym})$. However, it is clear
that the last term in Eq.~(\ref{eq:telep}) can be extended, by
linearity, to a completely positive trace-preserving map
$\mE:\bound(\sH_{\insym})\to \bound(\sH_0)\cong\bound(\sH_{\outsym})$
defined as:
\begin{equation}
\begin{split}
  &\mE(\rho)\\
  :=&\sum_{i=1}^{d^2}\Tr_{\beta\gamma}\left[\left(U^i_{\alpha}\otimes\openone_{\beta\gamma}\right)\left(\openone_{\alpha}\otimes
      \tilde B^i_{\beta\gamma}\right)\left(\Psi^+_{\alpha\beta}\otimes\rho_\gamma\right)\left(
      (U^i_{\alpha})^{\dag}\otimes\openone_{\beta\gamma}\right)\right],
\end{split}
\end{equation}
for all $\rho\in\sS(\sH_{\insym})$. This hence concludes the proof
that a completely positive trace-preserving map
$\mE:\bound(\sH_{\insym})\to \bound(\sH_{\outsym})$ exists, such that
\begin{equation}
  \mE(\sigma)=\mL(\sigma),
\end{equation}
for all $\sigma\in\sS_{\insym}$.
\end{proof}

Another important case is when the output state space $\sS_{\outsym}$
is abelian, namely, $[\rho,\sigma]=0$, for all
$\rho,\sigma\in\sS_{\outsym}$. This condition, in particular, implies
that there exists an orthonormal basis $\{|i\>\}_{i=1}^d$ for
$\sH_{\outsym}$ that diagonalizes all $\rho\in\sS_{\outsym}$.

\begin{proposition}\label{prop:class_ext}
  Given two state spaces $\sS_{\insym}$ and $\sS_{\outsym}$, defined
  on $\sH_{\insym}$ and $\sH_{\outsym}$, respectively, let
  $\sS_{\outsym}$ be abelian. If there exists a linear map
  $\mL:\bound(\sH_{\insym})\to\bound(\sH_{\outsym})$ inducing a
  statistical morphism from $\sS_{\insym}$ to $\sS_{\outsym}$, then
  there exists a completely positive, trace-preserving map
  $\mE:\bound(\sH_{\insym})\to \bound(\sH_{\outsym})$ such that
  \begin{equation}
    \mL(\rho)=\mE(\rho),
  \end{equation}
  for all $\rho\in\sS_{\insym}$.
\end{proposition}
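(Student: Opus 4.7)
The plan is to exploit abelianness to reduce the problem to a single diagonalizing basis, and then construct $\mE$ explicitly as a measure-and-prepare channel. Concretely, since $\sS_{\outsym}$ is abelian, fix an orthonormal basis $\{|i\>\}_{i=1}^{d}$ for $\sH_{\outsym}$ that simultaneously diagonalizes every state in $\sS_{\outsym}$. In particular, for every $\sigma\in\sS_{\insym}$, condition 1 of Definition~\ref{def:process} guarantees $\mL(\sigma)\in\sS_{\outsym}$, so $\mL(\sigma)$ is diagonal in this basis and is therefore completely determined by the $d$ numbers $\<i|\mL(\sigma)|i\>$.

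Next, consider the POVM $(|i\>\<i|)_{i=1}^d$ on $\sH_{\outsym}$. By assumption, $\mL$ is a statistical morphism, so by Definition~\ref{def:process}(2) the family $\bigl(\mL^*(|i\>\<i|)\bigr)_{i=1}^d$ is a test on $\sS_{\insym}$. Thus, by Definition~\ref{def:effects}, there exists an actual POVM $(Q^i)_{i=1}^d$ on $\sH_{\insym}$ with
\begin{equation}
\Tr[Q^i\rho]=\Tr\bigl[\mL^*(|i\>\<i|)\,\rho\bigr]=\<i|\mL(\rho)|i\>,
\qquad \forall\, \rho\in\sS_{\insym},\ \forall\, i.
\end{equation}

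Now define $\mE:\bound(\sH_{\insym})\to\bound(\sH_{\outsym})$ by
\begin{equation}
\mE(\rho):=\sum_{i=1}^{d}\Tr[Q^i\rho]\;|i\>\<i|.
\end{equation}
This is a quantum-to-classical ``measure-and-prepare'' channel and is therefore manifestly completely positive and trace-preserving. For any $\sigma\in\sS_{\insym}$, the previous display gives $\mE(\sigma)=\sum_i\<i|\mL(\sigma)|i\>\,|i\>\<i|$, which coincides with $\mL(\sigma)$ because $\mL(\sigma)$ lies in the abelian set $\sS_{\outsym}$ and is hence diagonal in the chosen basis.

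The proof is considerably lighter than that of Proposition~\ref{prop:ent_ext}: because the output is abelian, no teleportation-type extension is needed and a completeness hypothesis on an auxiliary state space can be dispensed with. The only place where one must be a little careful is the step turning the family $(\mL^*(|i\>\<i|))_i$ into an honest POVM $(Q^i)_i$ on $\sH_{\insym}$ — this is the substantive use of the statistical-morphism hypothesis, and without it the operators $\mL^*(|i\>\<i|)$ would merely be self-adjoint but possibly indefinite, so that the resulting $\mE$ would not be positive on all of $\bound(\sH_{\insym})$ even though it reproduces $\mL$ on $\sS_{\insym}$.
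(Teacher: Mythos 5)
Your proof is correct and follows essentially the same route as the paper's: both fix the common diagonalizing basis $\{|i\>\}_{i=1}^d$, use the statistical-morphism hypothesis to replace the family $(\mL^*(|i\>\<i|))_i$ by a genuine POVM on $\sH_{\insym}$, and define $\mE$ as the resulting measure-and-prepare channel, which reproduces $\mL$ on $\sS_{\insym}$ by diagonality of the outputs. Your closing remarks about why the abelian case avoids the teleportation/completeness machinery of Proposition~\ref{prop:ent_ext} are accurate but not a departure from the paper's argument.
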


\begin{proof}
  For $d=\dim\sH_{\outsym}$, let $\{|i\>\}_{i=1}^{d}$ be the basis for
  $\sH_{\outsym}$ that simultaneously diagonalizes every
  $\sigma\in\sS_{\outsym}$, and denote by
  $\Pi_i\in\bound(\sH_{\outsym})$ each projector $|i\>\<i|$. Then, for
  any $\sigma\in\sS_{\outsym}$
\begin{equation}\label{eq:ident-abe}
\sigma=\sum_{i=1}^d\Tr[\Pi_i\ \sigma]\Pi_i.
\end{equation}
Next, we note that, by defition of the trace-dual map $\mL^*$,
  \begin{equation}
    \Tr\left[\Pi^i\ \mL(\rho)\right]=\Tr\left[\mL^*(\Pi^i)\
      \rho\right],
  \end{equation}
  for all $\rho\in\sS_{\insym}$. The fact that $\mL$ is a statistical
  morphism implies, by definition, that the operators
  $(\mL^*(\Pi^i))_{i=1}^{d}$, even if not positive, yet induce a test
  on $\sS_{\insym}$. In other words, there exists a POVM $(\tilde
  \Pi^i)_{i=1}^{d}$ such that
    \begin{equation}
      \Tr\left[\mL^* (\Pi^i)\ \rho\right]=\Tr\left[\tilde \Pi^i\ \rho\right],
    \end{equation}
    for all $\rho\in\sS_{\insym}$ and every $i$.

    Using the POVM $(\tilde \Pi^i)_{i=1}^{d}$ (whose existence we
    proved above), we recall Eq.~(\ref{eq:ident-abe}) above and
    consider the identity:
\begin{equation}\label{eq:telep2}
\begin{split}
\mL(\rho)&=\sum_{i=1}^{d}\Tr\left[\Pi^i\ \mL(\rho)\right]\Pi^i\\
&=\sum_{i=1}^{d}\Tr\left[\tilde \Pi^i\ \rho\right]\Pi^i,
\end{split}
\end{equation}
The relation above holds for all
$\rho\in\sS_{\insym}\subseteq\sS(\sH_{\insym})$. However, it is clear
that the last term in Eq.~(\ref{eq:telep2}) can be extended, by
linearity, to a completely positive trace-preserving map $\mE:
\bound(\sH_{\insym})\to \bound(\sH_{\outsym})$ defined as:
\begin{equation}
\mE(\rho):=\sum_{i=1}^{d}\Tr\left[\tilde \Pi^i\ \rho\right]\Pi^i,
\end{equation}
for all $\rho\in\sS(\sH_{\insym})$. This hence concludes the proof that
a completely positive trace-preserving map
$\mE:\bound(\sH_{\insym})\to \bound(\sH_{\outsym})$ exists, such that
\begin{equation}
  \mE(\rho)=\mL(\rho),
\end{equation}
for all $\rho\in\sS_{\insym}$.
\end{proof}

\section{A fundamental equivalence relation}\label{sec:main}

In this section, we prove our main result:

\begin{theorem}\label{thm:main}
  Given two quantum information structures
  $\boldsymbol{\varrhoup}_{AB}=(\sH_A,\sH_B,\rho_{AB})$ and
  $\boldsymbol{\varsigmaup}_{AB'}=(\sH_A,\sH_{B'},\sigma_{AB'})$,
  \begin{equation}
\boldsymbol{\varrhoup}_{AB}\msucc\boldsymbol{\varsigmaup}_{AB'} \ \Leftrightarrow\    \boldsymbol{\varrhoup}_{AB}\supset_A\boldsymbol{\varsigmaup}_{AB'}.
\end{equation}
Moreover, the linear map inducing the statistical morphism between
$\boldsymbol{\varrhoup}_{AB}$ and $\boldsymbol{\varsigmaup}_{AB'}$ can
always be chosen to be trace-preserving on the whole space
$\bound(\sH_B)$.
\end{theorem}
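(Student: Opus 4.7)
The direction $(\Rightarrow)$ is immediate from the definitions: if $\mL_B$ is a statistical morphism with $\sigma_{AB'}=(\id_A\otimes\mL_B)(\rho_{AB})$, then for any POVM $\povm{Q}_{B'}$ realizing a test on $\sS_{B'}(\boldsymbol{\varsigmaup}_{AB'})$, clause~2 of Definition~\ref{def:process} makes $\mL_B^*(\povm{Q}_{B'})$ a test on $\sS_B(\boldsymbol{\varrhoup}_{AB})$, which by Remark~\ref{rem:povm-to-tests} Bob may adopt as a strategy on $\boldsymbol{\varrhoup}_{AB}$; trace-duality and $\sigma_{AB'}=(\id_A\otimes\mL_B)(\rho_{AB})$ force the two expected payoffs to coincide for every $(\outx,\mathrm{O}_A)$, proving $\boldsymbol{\varrhoup}_{AB}\supset_A\boldsymbol{\varsigmaup}_{AB'}$.

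For $(\Leftarrow)$, the plan is to promote the scalar information-ordering inequality to a set inclusion via Hahn--Banach separation, and then to read off $\mL_B$ from a single informationally complete measurement on $\sH_{B'}$. For each finite decision set $\outx$, consider the compact convex set
\begin{equation*}
\mathcal{C}_\outx(\boldsymbol{\varrhoup}_{AB}):=\left\{\left(\Tr_B[(\openone_A\otimes P^i_B)\rho_{AB}]\right)_{i\in\outx}:(P^i_B)\ \text{a POVM on $\sH_B$}\right\}\subset\bigoplus_{i\in\outx}\bound(\sH_A),
\end{equation*}
and its analogue $\mathcal{C}_\outx(\boldsymbol{\varsigmaup}_{AB'})$. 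A self-adjoint tuple $\mathrm{O}_A=(O^i_A)$ defines a linear functional whose maximum over $\mathcal{C}_\outx(\boldsymbol{\varrhoup}_{AB})$ equals $|\Theta|\cdot\$_{\textrm{q}}(\boldsymbol{\varrhoup}_{AB},\outx,\mathrm{O}_A)$; a putative point of $\mathcal{C}_\outx(\boldsymbol{\varsigmaup}_{AB'})$ lying outside $\mathcal{C}_\outx(\boldsymbol{\varrhoup}_{AB})$ would, by strict separation, yield an $\mathrm{O}_A$ violating the hypothesis. Hence $\mathcal{C}_\outx(\boldsymbol{\varsigmaup}_{AB'})\subseteq\mathcal{C}_\outx(\boldsymbol{\varrhoup}_{AB})$ for every $\outx$, and the case $|\outx|=1$ additionally forces $\rho_A=\sigma_A$.

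Next, fix an informationally complete POVM $(Q^i_{B'})_{i=1}^{(d')^2}$, $d'=\dim\sH_{B'}$, whose elements form a linear basis of $\bound(\sH_{B'})$. The inclusion just proved yields a POVM $(P^i_B)$ on $\sH_B$ with
\begin{equation*}
\Tr_B[(\openone_A\otimes P^i_B)\rho_{AB}]=\Tr_{B'}[(\openone_A\otimes Q^i_{B'})\sigma_{AB'}],\qquad\forall i.
\end{equation*}
Define $\mL_B^*:\bound(\sH_{B'})\to\bound(\sH_B)$ as the unique linear extension of $Q^i_{B'}\mapsto P^i_B$, and let $\mL_B$ be its trace-dual, globally defined on $\bound(\sH_B)$. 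The identity $\mL_B^*(\openone_{B'})=\sum_i P^i_B=\openone_B$ gives trace-preservation everywhere, which also yields the final sentence of the theorem. Expanding any $X_{B'}\in\bound(\sH_{B'})$ in the basis $(Q^i_{B'})$, for each $Y_A\in\bound(\sH_A)$ trace-duality gives $\Tr[(Y_A\otimes X_{B'})(\id_A\otimes\mL_B)(\rho_{AB})]=\Tr[(Y_A\otimes X_{B'})\sigma_{AB'}]$, whence $(\id_A\otimes\mL_B)(\rho_{AB})=\sigma_{AB'}$.

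It remains to verify the two clauses of Definition~\ref{def:process} for $\mL_B:\sS_B(\boldsymbol{\varrhoup}_{AB})\to\sS_{B'}(\boldsymbol{\varsigmaup}_{AB'})$. For $\omega=\Tr_A[(P_A\otimes\openone_B)\rho_{AB}]/\Tr[P_A\rho_A]\in\sS_B(\boldsymbol{\varrhoup}_{AB})$, the identity $(\id_A\otimes\mL_B)(\rho_{AB})=\sigma_{AB'}$ together with $\rho_A=\sigma_A$ gives $\mL_B(\omega)=\Tr_A[(P_A\otimes\openone_{B'})\sigma_{AB'}]/\Tr[P_A\sigma_A]\in\sS_{B'}(\boldsymbol{\varsigmaup}_{AB'})$. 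For the dual condition, any test $\test{M}_{B'}$ on $\sS_{B'}(\boldsymbol{\varsigmaup}_{AB'})$ realized by a POVM $(Q^{\prime i}_{B'})_{i\in\outy}$ yields, via the inclusion $\mathcal{C}_\outy(\boldsymbol{\varsigmaup}_{AB'})\subseteq\mathcal{C}_\outy(\boldsymbol{\varrhoup}_{AB})$, a POVM $(\tilde P^i_B)$ on $\sH_B$ matching it ensemble-by-ensemble, and a short calculation of the same type shows $\Tr[\mL_B^*(Q^{\prime i}_{B'})\omega]=\Tr[\tilde P^i_B\omega]$ for every $\omega\in\sS_B(\boldsymbol{\varrhoup}_{AB})$, so $\mL_B^*(\test{M}_{B'})$ is indeed a test on $\sS_B(\boldsymbol{\varrhoup}_{AB})$. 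The main obstacle in this plan is the Hahn--Banach step: one must argue that $\mathcal{C}_\outx(\boldsymbol{\varrhoup}_{AB})$ is closed and that every separating linear functional can be represented as a tuple of self-adjoint payoff operators as in Definition~\ref{def:info-ord-qis}, since only then does the convex-geometric inclusion follow from the scalar ordering.
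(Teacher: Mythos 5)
Your proof is correct and follows essentially the same route as the paper: a separation argument yielding, for every decision set $\outx$, the inclusion of achievable tuples $\bigl(\Tr_{B'}[(\openone_A\otimes Q^i_{B'})\sigma_{AB'}]\bigr)_i$ into those achievable on $\boldsymbol{\varrhoup}_{AB}$ (this is exactly the paper's Lemma~\ref{lemma:shmaya}), followed by the construction of $\mL_B$ from an informationally complete POVM on $\sH_{B'}$ and a matching POVM on $\sH_B$, with the test-to-test property of $\mL_B^*$ verified by a further application of the same inclusion. The only blemishes are cosmetic: the factor $|\Theta|$ in front of $\$_{\textrm{q}}(\boldsymbol{\varrhoup}_{AB},\outx,\mathrm{O}_A)$ is spurious for information structures, and the closedness/representability issues you flag at the end are routine in finite dimension (compactness of the POVM set and self-duality of the space of self-adjoint tuples under the trace pairing), exactly as the paper assumes.
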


\begin{remark}
  Shmaya, in Remark~7 of his Ref.~\cite{shmaya}, asks the question
  whether
  $\boldsymbol{\varrhoup}_{AB}\supset_A\boldsymbol{\varsigmaup}_{AB'}$
  is equivalent to the existence of a positive trace-preserving map
  $\mathcal{P}$ such that
  $\sigma_{AB'}=(\id\otimes\mathcal{P})\rho_{AB}$. The above theorem
  shows that Shmaya's question is equivalent to asking whether any
  trace-preserving statistical morphism always admits a
  trace-preserving positive extension (about this point, see
  Remark~\ref{rem:pos-morph} above).
\end{remark}

For the sake of clarity, we divide the proof of Theorem~\ref{thm:main}
in two parts. The first part is a lemma proved by Shmaya in
Ref.~\cite{shmaya}, as a direct consequence of the Separation Theorem
for convex sets (see, e.~g., Ref.~\cite{rocka}).

Before stating the lemma, we introduce the following notation: given a
quantum information structure
$\boldsymbol{\varrhoup}_{AB}=(\sH_A,\sH_B,\rho_{AB})$, a decision set
$\outx$, and a test $\test{M}_B=(M^i_B;i\in\outx)$ on the state
space $\sS_B(\boldsymbol{\varrhoup}_{AB})$, we define the following
operators:
\begin{equation}\label{eq:reduction}
  \rho^i_{A|\test{M}}:=\Tr_B\left[(\openone_A\otimes M^i_B)\ \rho_{AB}\right],
\end{equation}
for each $i\in\outx$. In Eq.~(\ref{eq:reduction}), we can replace the
family of operators $\test{M}_B$ by any other family of operators
which is statistically equivalent (in the sense of
Remark~\ref{rem:stat-equivalence}) to $\test{M}_B$ on
$\sS_B(\boldsymbol{\varrhoup}_{AB})$\footnote{This fact can be proved
  by noticing that the joint probability distribution
  $p_{\outy,\outx}(j,i):=\Tr[(F^j_A\otimes M^i_B)\ \rho_{AB}]$, where
  $(F^j_A;j\in\outy)$ is an informationally complete POVM on $\sH_A$,
  equals, for all $j\in\outy$ and all $i\in\outx$, that computed as
  $\Tr[(F^j_A\otimes X^i_B)\ \rho_{AB}]$, whenever
  $(X^i_B;i\in\outx)\sim_{\sS_B(\boldsymbol{\varrhoup}_{AB})}
  (M^i_B;i\in\outx)$. By the completeness of $(F^j_A;j\in\outy)$, we
  conclude that, in fact, $\Tr_B[(\openone_A\otimes M^i_B)\
  \rho_{AB}]=\Tr_B[(\openone_A\otimes X^i_B)\ \rho_{AB}]$, for all
  $i\in\outx$.}. In particular, we can replace the operators $M^i_B$
by the elements $P^i_B$ of any POVM $\povm{P}_B=(P^i_B;i\in\outx)$ on
$\sH_B$ realizing the test $\test{M}_B$ on
$\sS_B(\boldsymbol{\varrhoup}_{AB})$.

We are now ready to state the following:
\begin{lemma}[Shmaya~\cite{shmaya}]\label{lemma:shmaya}
  Given two quantum information structures
  $\boldsymbol{\varrhoup}_{AB}=(\sH_A,\sH_B,\rho_{AB})$ and
  $\boldsymbol{\varsigmaup}_{AB'}=(\sH_A,\sH_{B'},\sigma_{AB'})$, if
  $\boldsymbol{\varrhoup}_{AB}\supset_A\boldsymbol{\varsigmaup}_{AB'}$, then, for any finite decision
  set $\outx$ and any test
  $\test{N}_{B'}=(N^i_{B'};i\in\outx)$ on
  $\sS_{B'}(\boldsymbol{\varsigmaup}_{AB'})$, there exists a test
  $\overline{\test{M}}_B=\left(\overline{M}^i_B;i\in\outx\right)$ on
  $\sS_B(\boldsymbol{\varrhoup}_{AB})$ such that
  \begin{equation}
    \rho_{A\left|\overline{\test{M}}\right.}^i=\sigma_{A|\test{N}}^i
  \end{equation}
  for all $i\in\outx$.
\end{lemma}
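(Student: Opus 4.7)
The plan is to argue by contraposition, following the classical recipe of constructing, out of an ``unmatchable'' target tuple, a separating linear functional that witnesses a strict advantage of $\boldsymbol{\varsigmaup}_{AB'}$ over $\boldsymbol{\varrhoup}_{AB}$ in some decision game, thereby contradicting $\boldsymbol{\varrhoup}_{AB}\supset_A\boldsymbol{\varsigmaup}_{AB'}$.

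More concretely, for a fixed finite decision set $\outx$ I would introduce the set
\begin{equation*}
  \mathcal{K}(\boldsymbol{\varrhoup}_{AB};\outx):=\Bigl\{\bigl(\rho^i_{A|\test{M}}\bigr)_{i\in\outx}\;:\;\test{M}_B\text{ is a test on }\sS_B(\boldsymbol{\varrhoup}_{AB})\Bigr\},
\end{equation*}
viewed as a subset of the real vector space of $|\outx|$-tuples of self-adjoint operators on $\sH_A$, endowed with the coordinate-wise Hilbert--Schmidt inner product. By the footnote appended to Eq.~(\ref{eq:reduction}), every element of $\mathcal{K}$ arises as $\bigl(\Tr_B[(\openone_A\otimes P^i_B)\rho_{AB}]\bigr)_{i\in\outx}$ for some POVM $\povm{P}_B=(P^i_B;i\in\outx)$ on $\sH_B$ realizing the given test. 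Since the set of such POVMs is compact and convex in finite dimensions and the indicated map is linear, $\mathcal{K}(\boldsymbol{\varrhoup}_{AB};\outx)$ is compact and convex. Assuming now that the conclusion of the lemma fails for some test $\test{N}_{B'}$ on $\sS_{B'}(\boldsymbol{\varsigmaup}_{AB'})$, the tuple $(\sigma^i_{A|\test{N}})_{i\in\outx}$ lies outside $\mathcal{K}(\boldsymbol{\varrhoup}_{AB};\outx)$, and the Hahn--Banach separation theorem yields a family of self-adjoint operators $\mathrm{O}_A=(O^i_A;i\in\outx)$ in $\bound(\sH_A)$ and a constant $c\in\mathds{R}$ with
\begin{equation*}
  \sum_{i\in\outx}\Tr\!\left[O^i_A\,\sigma^i_{A|\test{N}}\right]\;>\;c\;\ge\;\sum_{i\in\outx}\Tr\!\left[O^i_A\,\rho^i_{A|\test{M}}\right]
\end{equation*}
for all tests $\test{M}_B$ on $\sS_B(\boldsymbol{\varrhoup}_{AB})$.

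Finally, since $\sum_i\Tr[O^i_A\,\rho^i_{A|\test{M}}]=\sum_i\Tr[(O^i_A\otimes P^i_B)\rho_{AB}]$ whenever $\povm{P}_B$ realizes $\test{M}_B$, the right-hand side above equals the expected payoff $\s_{\textrm{q}}(\boldsymbol{\varrhoup}_{AB},\outx,\mathrm{O}_A,\povm{P}_B)$; by Remark~\ref{rem:povm-to-tests} its maximum over $\povm{P}_B$ coincides with $\$_{\textrm{q}}(\boldsymbol{\varrhoup}_{AB},\outx,\mathrm{O}_A)$. The analogous identification for $\boldsymbol{\varsigmaup}_{AB'}$, together with the displayed separation inequality, yields
\begin{equation*}
  \$_{\textrm{q}}(\boldsymbol{\varsigmaup}_{AB'},\outx,\mathrm{O}_A)\;\ge\;\sum_{i\in\outx}\Tr\!\left[O^i_A\,\sigma^i_{A|\test{N}}\right]\;>\;c\;\ge\;\$_{\textrm{q}}(\boldsymbol{\varrhoup}_{AB},\outx,\mathrm{O}_A),
\end{equation*}
contradicting the hypothesis $\boldsymbol{\varrhoup}_{AB}\supset_A\boldsymbol{\varsigmaup}_{AB'}$. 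I expect the only genuine subtlety to be verifying that $\mathcal{K}$ is closed and convex in the correct ambient real vector space, so that the separating linear functional really corresponds to a tuple of \emph{self-adjoint} payoff operators eligible for Definition~\ref{def:info-ord-qis}; once that is in hand, the rest is straightforward trace-duality bookkeeping.
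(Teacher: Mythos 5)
Your proposal is correct and follows essentially the same route as the paper's proof: both define the convex, closed, bounded set of achievable reduced-operator tuples, assume the target tuple lies outside it, and invoke the separation theorem for convex sets to produce self-adjoint payoff operators witnessing a strict payoff advantage for $\boldsymbol{\varsigmaup}_{AB'}$, contradicting $\boldsymbol{\varrhoup}_{AB}\supset_A\boldsymbol{\varsigmaup}_{AB'}$. Your write-up merely spells out more explicitly the identification of the separating functional with a quantum statistical decision game, a step the paper leaves implicit.
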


\begin{proof}
  For the reader's convenience, we reformulate here Shmaya's proof
  according to our notation. For any finite decision set $\outx$, let us
  consider the set $\mathcal{C}_A(\boldsymbol{\varrhoup}_{AB},\outx)$ of all
  $|\outx|$-tuples
\begin{equation}
  \left(\rho_{A|\test{M}}^1\;,\,\rho_{A|\test{M}}^2\;,\,\cdots\;,\,\rho_{A|\test{M}}^{|\outx|}\right),
\end{equation}
where $\test{M}_B$ varies over all possible $\outx$-tests on
$\sS_B(\boldsymbol{\varrhoup}_{AB})$. Clearly,
$\mathcal{C}_A(\boldsymbol{\varrhoup}_{AB},\outx)$ is a closed and
bounded convex subset of the (real) linear space of $|\outx|$-tuples
$(T^i;i\in\outx)$ of self-adjoint matrices on $\sH_A$, since it
inherits its structure from the convex structure of the set of
$\outx$-tests on $\sS_B(\boldsymbol{\varrhoup}_{AB})$.

The proof then proceeds by \emph{reductio ad absurdum}. Suppose in
fact that, for some decision set $\outx$, there exists a test
$\test{N}_{B'}=(N^i_{B'}\;;\,i\in\outx)$ on $\sS_{B'}(\boldsymbol{\varsigmaup}_{AB'})$
such that the corresponding $|\outx|$-tuple
\begin{equation}
  \left(\sigma_{A|\test{N}}^1\;,\,\sigma_{A|\test{N}}^2\;,\,\cdots\;,\,\sigma_{A|\test{N}}^{|\outx|}\right)\notin\mathcal{C}_A(\boldsymbol{\varrhoup}_{AB}\;,\,\outx).
\end{equation}
Then, by the so-called Separation Theorem between convex sets (see,
e.~g., Ref.~\cite{rocka}, Corollary 11.4.2), there exists a
$|\outx|$-tuple of self-adjoint operators $(\tilde T^i_A;i\in\outx)$
on $\sH_A$, such that
\begin{equation}
  \max_{\test{M}_B}\sum_{i\in\outx}\Tr\left[\rho_{A|\test{M}}^i\
    \tilde T^i\right]<\sum_{i\in\outx}\Tr\left[\sigma_{A|\test{N}}^i\
    \tilde T^i\right],
\end{equation}
where the maximization if taken over all tests $\test{M}_B
=(M^i_B\;;\,i\in\outx)$ on $\sS_B(\boldsymbol{\varrhoup}_{AB})$. This
contradicts the assumption
$\boldsymbol{\varrhoup}_{AB}\supset_A\boldsymbol{\varsigmaup}_{AB'}$.
\end{proof}

\begin{proof}[Proof of Theorem~\ref{thm:main}]
  One direction of the theorem, that is
  $\boldsymbol{\varrhoup}_{AB}\msucc\boldsymbol{\varsigmaup}_{AB'}\
  \Rightarrow\ \boldsymbol{\varrhoup}_{AB}
  \supset_A\boldsymbol{\varsigmaup}_{AB'}$, simply follows from the
  definition of m-sufficiency given in Definition~\ref{def:suff}.

  Only the converse direction, i.e. $\boldsymbol{\varrhoup}_{AB}
  \supset_A\boldsymbol{\varsigmaup}_{AB'}\ \Rightarrow\
  \boldsymbol{\varrhoup}_{AB}\msucc\boldsymbol{\varsigmaup}_{AB'}$, is
  hence non trivial. In order to construct a statistical morphism
  $\mL_B:\sS_B(\boldsymbol{\varrhoup}_{AB})\to
  \sS_{B'}(\boldsymbol{\varsigmaup}_{AB'})$, consider the decision set
  $\outx=\{1,2,\cdots,(\dim\sH_{B'})^2\}$ and an informationally
  complete $\outx$-POVM $(F^i_{B'};i\in\outx)$ on $\sH_{B'}$, with
  self-adjoint dual operators $(\theta^i_{B'};i\in\outx)$. The
  following identity holds
  \begin{equation}
    T_{B'}=\sum _{i\in\outx}\Tr[T_{B'}\ F^i_{B'}]\theta^i_{B'},
  \end{equation}
  for all operators $T_{B'}\in\bound(\sH_{B'})$. By linearity then
  \begin{equation}\label{eq:identity-ex}
    T_{AB'}=\sum _{i\in\outx}\Tr_{B'}\left[T_{AB'}\ (\openone_A\otimes F^i_{B'})\right]\otimes\theta^i_{B'},
  \end{equation}
for all operators $T_{AB'}\in\bound(\sH_A\otimes\sH_{B'})$.

Let us now put, in Eq.~(\ref{eq:identity-ex}),
$T_{AB'}=\sigma_{AB'}$. Since we assume $\boldsymbol{\varrhoup}_{AB}
\supset_A\boldsymbol{\varsigmaup}_{AB'}$, by Lemma~\ref{lemma:shmaya}, there exists a
$\outx$-POVM $(\tilde F^i_B;i\in\outx)$ on $\sH_B$ such that
  \begin{equation}\label{eq:tuple-map}
\Tr_B\left[\rho_{AB}\ (\openone_A\otimes \tilde F^i_B)\right]=   \Tr_{B'}\left[\sigma_{AB'}\ (\openone_A\otimes F^i_{B'})\right],
  \end{equation}
  for all $i\in\outx$. Fixed any such POVM $(\tilde F^i_B;i\in\outx)$,
  we define a linear map $\mL_B:\bound(\sH_B)\to\bound(\sH_{B'})$ via
  the relation
  \begin{equation}\label{eq:mappa}
    \mL_B(T_B):=\sum _{i\in\outx}\Tr[T_B\ \tilde F^i_B]\theta^i_{B'},
  \end{equation}
  for all operators $T_B\in\bound(\sH_B)$. Equivalently, the linear
  map $\mL_B$ can be defined by the relations
  $\mL_{B'}^*(F^i_{B'})=\tilde F^i_B$, for all $i\in\outx$. This
  guarantees $\mL_{B'}^*(\openone_{B'})=\openone_B$, i.e. the linear
  map $\mL_B$ is trace-preserving. Now, we have to check that the
  linear map $\mL_B:\bound(\sH_B)\to\bound(\sH_{B'})$ so constructed
  in fact satisfies both conditions in Definition~\ref{def:process}
  and induces a statistical morphism from
  $\sS_B(\boldsymbol{\varrhoup}_{AB})$ to
  $\sS_{B'}(\boldsymbol{\varsigmaup}_{AB'})$.


  We begin by noting that, as a consequence of
  Eqs.~(\ref{eq:identity-ex}), (\ref{eq:tuple-map}),
  and~(\ref{eq:mappa}),
  $(\id_A\otimes\mL_B)(\rho_{AB})=\sigma_{AB'}$. This can be shown as
  follows:
  \begin{equation}
\begin{split}
  \sigma_{AB'}=&\sum _{i\in\outx}\Tr_{B'}\left[\sigma_{AB'}\
    (\openone_A\otimes F^i_{B'})\right]\otimes\theta^i_{B'}\\
=&\sum _{i\in\outx}\Tr_B\left[\rho_{AB}\ (\openone_A\otimes \tilde
  F^i_B)\right]\otimes\theta^i_{B'}\\
\overset{\textrm{def}}{=}&(\id_A\otimes\mL_B)(\rho_{AB}).
\end{split}  
\end{equation}
This also ensures that
$\mL_B(\sS_B(\boldsymbol{\varrhoup}_{AB}))\subseteq
\sS_{B'}(\boldsymbol{\varsigmaup}_{AB'})$.

Let now $\outx$ be an arbitrary (finite) decision set, and let
$\test{N}_{B'}:=(N^i_{B'};i\in\outx)$ be any $\outx$-test on
$\sS_{B'}(\boldsymbol{\varsigmaup}_{AB'})$. We will now check, by
applying Lemma~\ref{lemma:shmaya}, that the operators
$X^i_B:=\mL^*(N^i_{B'})$ indeed constitute a test on
$\sS_B(\boldsymbol{\varrhoup}_{AB})$. The proof goes as follows: for
every $\omega_B\in\sS_B(\boldsymbol{\varrhoup}_{AB})$, let
$R_A^{\omega}\in\bound(\sH_A)$ be the positive operator such that
$\omega_B=\Tr_A\left[(R_A^{\omega}\otimes\openone_B)\
  \rho_{AB}\right]$. Consider now, for all $i\in\outx$, the trace
\begin{equation}\label{eq:jkl}
\begin{split}
  \Tr[X^i_B\ \omega_B]&=\Tr\left[(R_A^{\omega}\otimes X^i_B)\rho_{AB}\right]\\
  &=\Tr\left[R_A^{\omega}\ \Tr_B\left[(\openone_A\otimes X^i_B)\
      \rho_{AB}\right]\right]\\
  &=\Tr\left[R_A^{\omega}\ \Tr_B\left[(\openone_A\otimes \mL_B^*(N^i_{B'}))\
      \rho_{AB}\right]\right]\\
&=\Tr\left[R_A^{\omega}\ \Tr_{B'}\left[(\openone_A\otimes N^i_{B'})\
      (\id_A\otimes\mL_B)(\rho_{AB})\right]\right]\\
  &=\Tr\left[R_A^{\omega}\ \Tr_{B'}\left[(\openone_A\otimes N^i_{B'})\
      \sigma_{AB'}\right]\right].
\end{split}
\end{equation}
Lemma~\ref{lemma:shmaya} provides the existence of a POVM
$(\overline{P}^i_B;i\in\outx)$ on $\sH_B$ such that
\begin{equation}
  \Tr_B\left[(\openone_A\otimes \overline{P}^i_B)\
    \rho_{AB}\right]=\Tr_{B'}\left[(\openone_A\otimes N^i_{B'})\
    \sigma_{AB'}\right],
\end{equation}
for all $i\in\outx$. Plugging such POVM into Eq.~(\ref{eq:jkl}), we
obtain
\begin{equation}
  \begin{split}
    \Tr[X^i_B\ \omega_B]&=\Tr\left[R_A^{\omega}\
      \Tr_{B'}\left[(\openone_A\otimes N^i_{B'})\
        \sigma_{AB'}\right]\right]\\
    &=\Tr\left[R_A^{\omega}\ \Tr_B\left[(\openone_A\otimes
        \overline{P}^i_B)\ \rho_{AB}\right]\right]\\
&=\Tr\left[(R_A^{\omega}\otimes
        \overline{P}^i_B)\ \rho_{AB}\right]\\
    &=\Tr\left[\overline{P}^i_B\ \omega_B\right],
\end{split}
\end{equation}
for all $i\in\outx$. Since this holds for every
$\omega_B\in\sS_B(\boldsymbol{\varrhoup}_{AB})$, we proved that, for
any finite $\outx$ and any $\outx$-test $(N^i_{B'};i\in\outx)$ on
$\sS_{B'}(\boldsymbol{\varsigmaup}_{AB'})$, the operators
$X^i_B:=\mL_B^*\left(N^i_{B'}\right)$ indeed constitute a test on
$\sS_B(\boldsymbol{\varrhoup}_{AB})$. This shows that $\mL_B$ is a
well-defined statistical morphism from $\sS_B(\boldsymbol{\varrhoup}_{AB})$ to
$\sS_{B'}(\boldsymbol{\varsigmaup}_{AB'})$, as requested.
\end{proof}



\section{The Blackwell-Sherman-Stein theorem in the quantum
  case}\label{sec:black}

The BSS theorem (see Theorem~\ref{theo:blackwell-orig}) is about the
comparison of classical statistical models. According to
Postulate~\ref{post:one}, however, we can actually identify the notion
of classical statistical models with that of abelian quantum
statistical models, so that the BSS theorem becomes a statement about
comparison of abelian quantum statistical models. In this sense then,
we call a ``non-commutative (or quantum) BSS theorem'' a statement
characterizing equivalent conditions for the comparison of general
quantum statistical models, recovering
Theorem~\ref{theo:blackwell-orig} in the abelian case. In the
following, we will show how Theorem~\ref{thm:main} can be used to
prove such a generalized statement.

\begin{definition}
  Given a quantum statistical model
  $\qexp{R}=(\Theta,\sH,\boldsymbol{\rho})$, the associated state
  space $\sS(\qexp{R})\subset\sS(\sH)$ is defined as the set of states
  $\sS(\qexp{R})=\{\rho_\theta:\theta\in\Theta\}$.
\end{definition}

\begin{remark}
  As already noticed in Remark~\ref{rem:povm-to-tests}, it is
  irrelevant whether the maximum in Eq.~(\ref{eq:q-exp-payoff}) is
  taken over POVM's on $\sH$ or over tests on $\sS(\qexp{R})$. For
  what follows, however, it is convenient to consider the expected
  payoff as maximized over tests, rather than POVM's.
\end{remark}

As it happens for quantum information structures (see
Definition~\ref{def:suff}), also for quantum statistical models we
have two different notions of sufficiency:

\begin{definition}[Sufficiency and m-sufficiency]\label{def:qexp_suff}
  The quantum statistical model $\qexp{R}=(\Theta,\sH,\boldsymbol{\rho})$
  is said to be \emph{m-sufficient for}
  $\qexp{S}=(\Theta,\sH',\boldsymbol{\sigma})$, in formula
\begin{equation}
  \qexp{R}\msucc\qexp{S},
\end{equation}
if and only if there exists a statistical morphism
$\mL:\sS(\qexp{R})\to\sS(\qexp{S})$ such that
\begin{equation}
  \sigma_\theta=\mL(\rho_\theta),\qquad\forall\theta\in\Theta.
\end{equation}
The quantum statistical model $\qexp{R}=(\Theta,\sH,\boldsymbol{\rho})$
is said to be \emph{sufficient for}
$\qexp{S}=(\Theta,\sH',\boldsymbol{\sigma})$, in formula
\begin{equation}
  \qexp{R}\succ\qexp{S},
\end{equation}
if and only if there exists a completely positive, trace-preserving
map $\mE:\bound(\sH)\to\bound(\sH')$ such that
\begin{equation}
  \sigma_\theta=\mE(\rho_\theta),\qquad\forall\theta\in\Theta.
\end{equation}
\end{definition}

Theorem~\ref{thm:main}, via the correspondence exhibited in
Eq.~(\ref{eq:bip-exp}), directly implies the following:

\begin{theorem}[Non-commutative BSS Theorem]\label{coro:weak-qexp}
  Given two quantum statistical models
  $\qexp{R}=(\Theta,\sH,\boldsymbol{\rho})$ and
  $\qexp{S}=(\Theta,\sH',\boldsymbol{\sigma})$,
  \begin{equation}
\qexp{R}\msucc\qexp{S}    \ \Leftrightarrow\ \qexp{R}\supset\qexp{S}.
\end{equation}
\end{theorem}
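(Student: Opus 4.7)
The plan is to deduce the theorem directly from Theorem~\ref{thm:main} by lifting each quantum statistical model to the classical-quantum information structure introduced in Eq.~(\ref{eq:bip-exp}). Explicitly, I would associate to $\qexp{R}=(\Theta,\sH,\boldsymbol{\rho})$ the structure $\boldsymbol{\varrhoup}_{AB}^{\qexp{R}}:=(\sH_A,\sH,\rho_{AB}^{\qexp{R}})$ with $\sH_A\cong\sH_\Theta$ of dimension $|\Theta|$, and similarly define $\boldsymbol{\varsigmaup}_{AB'}^{\qexp{S}}$. The argument then has two independent reductions, after which Theorem~\ref{thm:main} closes the proof.

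The first reduction will establish $\qexp{R}\supset\qexp{S}\Leftrightarrow\boldsymbol{\varrhoup}_{AB}^{\qexp{R}}\supset_A\boldsymbol{\varsigmaup}_{AB'}^{\qexp{S}}$, and is essentially Remark~\ref{rem:from-str-to-qmod}. Given any payoff function $\ell:\Theta\times\outx\to\mathbb{R}$, the diagonal payoff operators $O^i_A:=\sum_{\theta\in\Theta}\ell(\theta,i)|\theta\>\<\theta|$ reproduce $\$_{\textrm{q}}(\qexp{R},\outx,\ell)$ as $\$_{\textrm{q}}(\boldsymbol{\varrhoup}_{AB}^{\qexp{R}},\outx,\mathrm{O}_A)$. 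Conversely, because the $A$-marginals of $\rho_{AB}^{\qexp{R}}$ and $\sigma_{AB'}^{\qexp{S}}$ are both diagonal in the basis $\{|\theta\>\}$, only the diagonal part of an arbitrary self-adjoint $O^i_A$ contributes to the corresponding expected payoff; hence any self-adjoint payoff operator is equivalent, for the purpose of computing $\$_{\textrm{q}}$, to a diagonal one and thus to some classical $\ell$, and the two ordering relations coincide.

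The second reduction will establish $\qexp{R}\msucc\qexp{S}\Leftrightarrow\boldsymbol{\varrhoup}_{AB}^{\qexp{R}}\msucc\boldsymbol{\varsigmaup}_{AB'}^{\qexp{S}}$. For the forward direction, a statistical morphism $\mL:\sS(\qexp{R})\to\sS(\qexp{S})$ with $\mL(\rho_\theta)=\sigma_\theta$ yields $(\id_A\otimes\mL)(\rho_{AB}^{\qexp{R}})=\sigma_{AB'}^{\qexp{S}}$ by linearity of $\id_A\otimes\mL$ on the expansion~(\ref{eq:bip-exp}); since $\sS_B(\boldsymbol{\varrhoup}_{AB}^{\qexp{R}})$ coincides with the convex hull of $\{\rho_\theta\}$, the linear map $\mL$ automatically maps it into the convex hull of $\{\sigma_\theta\}$, and the dual test-preservation condition carries over because tests on a finite family of states and on its convex hull are determined by the same identities $\Tr[M^i\rho_\theta]=\Tr[P^i\rho_\theta]$. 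For the converse, given a statistical morphism $\mL_B$ with $(\id_A\otimes\mL_B)(\rho_{AB}^{\qexp{R}})=\sigma_{AB'}^{\qexp{S}}$, I would pair both sides with $|\theta\>\<\theta|_A$ on side $A$ and invoke the linear independence of the orthogonal projectors $|\theta\>\<\theta|_A$ to extract $\mL_B(\rho_\theta)=\sigma_\theta$ for every $\theta$; the restriction of $\mL_B$ to $\sS(\qexp{R})$ is then the desired statistical morphism.

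Combining the two reductions with Theorem~\ref{thm:main} applied to $\boldsymbol{\varrhoup}_{AB}^{\qexp{R}}$ and $\boldsymbol{\varsigmaup}_{AB'}^{\qexp{S}}$ yields the claimed equivalence. I expect the only subtle step to be the bookkeeping in the second reduction: one has to keep straight the distinction between the discrete indexed family $\sS(\qexp{R})=\{\rho_\theta\}$ and the convex state space $\sS_B(\boldsymbol{\varrhoup}_{AB}^{\qexp{R}})$ arising from the lifted information structure, and check that the definition of statistical morphism (a quantifier over all tests) survives both extension to and restriction from the convex hull. This amounts to a routine convexity argument, with no real analytic difficulty beyond what is already packaged in Definition~\ref{def:process}.
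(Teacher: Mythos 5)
Your proposal is correct and follows essentially the same route as the paper: lift both models to classical--quantum information structures via Eq.~(\ref{eq:bip-exp}), check that the information orderings and the m-sufficiency relations transfer between the two pictures (the paper delegates this to Remark~\ref{rem:from-str-to-qmod} and calls it ``easy to verify''), and conclude by Theorem~\ref{thm:main}. The details you supply --- diagonal payoff operators, the identification of $\sS_B(\boldsymbol{\varrhoup}_{AB}^{\qexp{R}})$ with the convex hull of $\{\rho_\theta\}$, and the extraction of $\mL_B(\rho_\theta)=\sigma_\theta$ from the linear independence of the projectors $|\theta\>\<\theta|_A$ --- are exactly the bookkeeping the paper leaves implicit.
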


\begin{proof}
  Given the quantum statistical model
  $\qexp{R}=(\Theta,\sH,\boldsymbol{\rho})$, let us construct the quantum
  information structure
  $\boldsymbol{\varrhoup}_{AB}^{\qexp{R}}=(\sH_A,\sH_B,\rho_{AB})$, as
  done in Eq.~(\ref{eq:bip-exp}). Let us repeat the same construction
  (using the same basis for $\sH_A\cong\sH_\Theta$) to obtain
  $\boldsymbol{\varsigmaup}_{AB'}^{\qexp{S}}=(\sH_A,\sH_{B'},\sigma_{AB})$
  from $\qexp{S}=(\Theta,\sH',\boldsymbol{\sigma})$. Keeping in mind
  Remark~\ref{rem:from-str-to-qmod}, it is easy to verify that
\begin{equation}
   \qexp{R}\msucc\qexp{S}\ \Leftrightarrow\ \boldsymbol{\varrhoup}_{AB}^{\qexp{R}}\msucc\boldsymbol{\varsigmaup}_{AB'}^{\qexp{S}},
\end{equation}
and that
\begin{equation}
  \qexp{R}\supset\qexp{S}\ \Leftrightarrow\ \boldsymbol{\varrhoup}_{AB}^{\qexp{R}}\supset_A\boldsymbol{\varsigmaup}_{AB'}^{\qexp{S}}.
\end{equation}
We then obtain the statement by direct application of Theorem~\ref{thm:main}.
\end{proof}

Further, by applying Proposition~\ref{prop:class_ext}, we obtain the
following:

\begin{proposition}[Semi-classical case]\label{coro:blackwell}
  Given two quantum statistical models
  $\qexp{R}=(\Theta,\sH,\boldsymbol{\rho})$ and
  $\qexp{S}=(\Theta,\sH',\boldsymbol{\sigma})$, if $\qexp{S}$ is abelian,
  \begin{equation}
  \qexp{R}\succ\qexp{S} \ \Leftrightarrow\  \qexp{R}\supset\qexp{S}.
\end{equation}
\end{proposition}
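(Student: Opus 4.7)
The proof plan is to combine Theorem~\ref{coro:weak-qexp} (the non-commutative BSS theorem) with the abelian extension result Proposition~\ref{prop:class_ext}. The forward direction is immediate: any completely positive trace-preserving map is automatically a statistical morphism (cf.\ Remark~\ref{rem:pos-morph}), so $\qexp{R}\succ\qexp{S}$ implies $\qexp{R}\msucc\qexp{S}$, which by Theorem~\ref{coro:weak-qexp} gives $\qexp{R}\supset\qexp{S}$. This direction requires no use of the abelian hypothesis.

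For the non-trivial converse, I would assume $\qexp{R}\supset\qexp{S}$ and apply Theorem~\ref{coro:weak-qexp} to obtain $\qexp{R}\msucc\qexp{S}$, that is, a statistical morphism $\mL:\sS(\qexp{R})\to\sS(\qexp{S})$ with $\sigma_\theta=\mL(\rho_\theta)$ for every $\theta\in\Theta$. The goal is then to promote $\mL$ to a genuine CPTP map $\mE:\bound(\sH)\to\bound(\sH')$ that agrees with $\mL$ on the input state space $\sS(\qexp{R})$. This is precisely the content of Proposition~\ref{prop:class_ext}, which applies because the hypothesis that $\qexp{S}$ is abelian means, by Definition~\ref{def:exp}, that $[\sigma_\theta,\sigma_{\theta'}]=0$ for all $\theta,\theta'$, so the output state space $\sS(\qexp{S})$ consists of pairwise commuting density matrices and is therefore an abelian state space in the sense needed for Proposition~\ref{prop:class_ext}.

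Applying that extension theorem yields a CPTP map $\mE:\bound(\sH)\to\bound(\sH')$ with $\mE(\rho)=\mL(\rho)$ for every $\rho\in\sS(\qexp{R})$; in particular, $\mE(\rho_\theta)=\mL(\rho_\theta)=\sigma_\theta$ for every $\theta\in\Theta$, so $\qexp{R}\succ\qexp{S}$ according to Definition~\ref{def:qexp_suff}. I do not anticipate any genuine obstacle: all the work has been done upstream, in Theorem~\ref{thm:main} (which is invoked through Theorem~\ref{coro:weak-qexp} via the bipartite embedding of Eq.~(\ref{eq:bip-exp})) and in the abelian extension Proposition~\ref{prop:class_ext}. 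The only subtlety worth mentioning is that the abelian hypothesis cannot be dropped: without it, the statistical morphism $\mL$ produced by the BSS-type theorem need not extend to a CPTP (or even positive) map, as already flagged in Remark~\ref{rem:pos-morph}.
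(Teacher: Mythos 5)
Your proposal is correct and follows exactly the paper's own route: it combines Theorem~\ref{coro:weak-qexp} (to pass between $\supset$ and $\msucc$) with the abelian extension Proposition~\ref{prop:class_ext} (to upgrade the statistical morphism to a CPTP map on the commuting output state space). Your write-up is simply a more explicit unpacking of the paper's one-paragraph argument, with the correct observation that the abelian hypothesis is needed only for the converse direction.
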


\begin{proof}
  By definition, $\qexp{S}$ is an abelian quantum statistical model if
  and only if $\sS(\qexp{S})$ is an abelian state space. Then, due to
  Proposition~\ref{prop:class_ext}, we know that, whenever $\qexp{S}$
  is an abelian quantum statistical model, $\qexp{R}\msucc\qexp{S}$ if
  and only if $\qexp{R}\succ\qexp{S}$. With these remarks at hand, the
  statement is finally proved as a simple consequence of
  Theorem~\ref{coro:weak-qexp} above.
\end{proof}

Notice that Proposition~\ref{coro:blackwell} is still more general
than the BSS theorem, since commutativity is required only for
$\qexp{S}$, whereas the classical case is equivalent to the situation
in which both $\qexp{R}$ and $\qexp{S}$ are
abelian. Proposition~\ref{coro:blackwell} hence describes a
``semi-classical'' scenario. In the case in which also $\qexp{R}$ is
an abelian quantum statistical model, it is easy to prove that any
completely positive, trace-preserving map $\mE$ such that
$\sigma_\theta=\mE(\rho_\theta)$ can be in fact written as a
transition matrix $\mathrm{M}_\mE$, mapping the vectors $\vec
p_\theta$ of eigenvalues of $\rho_\theta$ into the vectors $\vec
q_\theta$ of eigenvalues of $\sigma_\theta$, for all
$\theta\in\Theta$, in complete accordance with the notion of
sufficiency used in the BSS theorem~\ref{theo:blackwell-orig}. We
leave the proof of this to the reader.

Next, we show that Theorem~\ref{coro:weak-qexp}, together with
Proposition~\ref{prop:ent_ext}, provides an equivalent
characterization of the sufficiency relation $\succ$ for quantum
statistical models. We first need the following definitions:

\begin{definition}[Composition of Quantum Statistical Models]
  Given any two quantum statistical models
  $\qexp{R}=(\Theta,\sH,\boldsymbol{\rho})$, with
  $\boldsymbol{\rho}=(\rho_\theta;\theta\in\Theta)$, and
  $\qexp{T}=(\Xi,\mathcal{K},\boldsymbol{\tau})$, with
  $\boldsymbol{\tau}=(\tau_\xi;\xi\in\Xi)$, the composition
  $\qexp{T}\times\qexp{R}$ is defined as the quantum statistical model
  $(\Xi\times\Theta,\mathcal{K}\otimes\sH,\boldsymbol{\tau}\times
  \boldsymbol{\rho})$, where $\boldsymbol{\tau}\times
  \boldsymbol{\rho}:=(\tau_\xi\otimes
  \rho_\theta;\xi\in\Xi,\theta\in\Theta)$. Moreover,
  $\sS(\qexp{T}\times\qexp{R})=\sS(\qexp{T})\times \sS(\qexp{R})$.
\end{definition}

\begin{definition}[Complete Quantum Statistical Models]
  A quantum statistical model
  $\qexp{T}=(\Xi,\mathcal{K},\boldsymbol{\tau})$ is said to be
  \emph{complete} if and only if $\sS(\qexp{T})$ is a complete state
  space.
\end{definition}

\begin{proposition}[Equivalent condition for sufficiency]\label{prop:blackwell-comp-pos}
  Given two quantum statistical models
  $\qexp{R}=(\Theta,\sH,\boldsymbol{\rho})$ and
  $\qexp{S}=(\Theta,\sH',\boldsymbol{\sigma})$, the following are
  equivalent:
  \begin{enumerate}
  \item \begin{equation}
    \qexp{R}\succ\qexp{S};
\end{equation}
\item
\begin{equation}
  \qexp{T}\times\qexp{R}\supset\qexp{T}\times\qexp{S},
\end{equation}
for every auxiliary quantum statistical model
$\qexp{T}=(\Xi,\mathcal{K},\boldsymbol{\tau})$;
\item
\begin{equation}
  \qexp{T}\times\qexp{R}\supset\qexp{T}\times\qexp{S},
\end{equation}
for some complete quantum statistical model
$\qexp{T}=(\Xi,\sK,\boldsymbol{\tau})$ with $\sK\cong\sH'$.
\end{enumerate}
\end{proposition}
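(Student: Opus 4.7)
The plan is a cyclic argument $(1) \Rightarrow (2) \Rightarrow (3) \Rightarrow (1)$. For $(1) \Rightarrow (2)$, if $\mE:\bound(\sH)\to\bound(\sH')$ is CPTP with $\sigma_\theta=\mE(\rho_\theta)$, then $\id_\sK\otimes\mE$ is CPTP and realizes $\qexp{T}\times\qexp{R}\succ\qexp{T}\times\qexp{S}$ for every auxiliary $\qexp{T}$; via Remark~\ref{rem:pos-morph} (so $\succ$ implies $\msucc$) and Theorem~\ref{coro:weak-qexp}, this yields the information ordering. The implication $(2)\Rightarrow(3)$ is immediate, since complete quantum statistical models on any Hilbert space $\sK\cong\sH'$ exist, for instance by perturbing $\openone/\dim\sH'$ along a Hermitian operator basis so as to exhibit $(\dim\sH')^2$ linearly independent density matrices.

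The nontrivial direction is $(3)\Rightarrow(1)$. From $\qexp{T}\times\qexp{R}\supset\qexp{T}\times\qexp{S}$ with $\qexp{T}$ complete and $\sK\cong\sH'$, Theorem~\ref{coro:weak-qexp} first yields a linear map $\mathcal{M}:\bound(\sK\otimes\sH)\to\bound(\sK\otimes\sH')$, trace-preserving on the whole space, inducing a statistical morphism from $\sS(\qexp{T})\times\sS(\qexp{R})$ to $\sS(\qexp{T})\times\sS(\qexp{S})$ and satisfying $\mathcal{M}(\tau_\xi\otimes\rho_\theta)=\tau_\xi\otimes\sigma_\theta$. Fixing any reference state $\omega_0\in\sS(\qexp{T})$, I define
\begin{equation*}
  \mL(X):=\Tr_\sK\bigl[\mathcal{M}(\omega_0\otimes X)\bigr],
\end{equation*}
which is linear on $\bound(\sH)$ and satisfies $\mL(\rho_\theta)=\sigma_\theta$. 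The key check is that the factorized map $\id_\sK\otimes\mL$ is itself a statistical morphism from $\sS(\qexp{T})\times\sS(\qexp{R})$ to $\sS(\qexp{T})\times\sS(\qexp{S})$: the state condition reduces to $(\id\otimes\mL)(\tau_\xi\otimes\rho_\theta)=\tau_\xi\otimes\sigma_\theta$, while for any POVM $(P^i)$ on $\sK\otimes\sH'$ one computes
\begin{equation*}
  \Tr\bigl[(\id\otimes\mL^*)(P^i)\,(\tau_\xi\otimes\rho_\theta)\bigr]=\Tr[P^i(\tau_\xi\otimes\sigma_\theta)]=\Tr\bigl[\mathcal{M}^*(P^i)(\tau_\xi\otimes\rho_\theta)\bigr],
\end{equation*}
and the right-hand side, by the statistical-morphism property of $\mathcal{M}$, is reproduced on all of $\sS(\qexp{T})\times\sS(\qexp{R})$ by a genuine POVM on $\sK\otimes\sH$. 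Since $\sS(\qexp{T})$ is complete on $\sK\cong\sH'$, Proposition~\ref{prop:ent_ext} then extends $\mL$ from $\sS(\qexp{R})$ to a CPTP map $\mE:\bound(\sH)\to\bound(\sH')$ with $\mE(\rho_\theta)=\sigma_\theta$, i.e., $\qexp{R}\succ\qexp{S}$.

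The main obstacle is the passage from the general statistical morphism $\mathcal{M}$ produced by Theorem~\ref{coro:weak-qexp}, which has no a priori product structure on $\bound(\sK\otimes\sH)$, to a map of the specific factorized form $\id_\sK\otimes\mL$ demanded by Proposition~\ref{prop:ent_ext}. Fixing an auxiliary state $\omega_0$ inside the complete ensemble $\sS(\qexp{T})$ is precisely the device that isolates such an $\mL$ on the input side, and the dual map $\id\otimes\mL^*$ inherits the test-to-test property from $\mathcal{M}^*$ on product inputs; the completeness of $\sS(\qexp{T})$ on $\sK\cong\sH'$ is then what powers the entanglement-based extension theorem to upgrade $\mL$ to an honest CPTP map.
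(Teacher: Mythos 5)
Your proof is correct and follows essentially the same route as the paper: the nontrivial direction $3\Rightarrow 1$ is handled, exactly as in the paper, by combining Theorem~\ref{coro:weak-qexp} with the extension result of Proposition~\ref{prop:ent_ext}. The only difference is in presentation: where the paper tersely asserts that completeness of $\sS(\qexp{T})$ forces the morphism to take the factorized form $\id\otimes\mL'$, you build the factor $\mL$ explicitly by partial-tracing $\mathcal{M}$ against a reference state and then re-verify that $\id\otimes\mL$ is a statistical morphism on the product state spaces --- a more explicit (and welcome) rendering of the same step.
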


\begin{proof}
  The implications ``1 $\Rightarrow$ 2'' and ``2 $\Rightarrow$ 3'' are
  trivial. In order to prove the implication ``3 $\Rightarrow$ 1'',
  let us consider an auxiliary quantum statistical model
  $\qexp{T}=(\Xi,\sH',\boldsymbol{\tau})$, such that $\sS(\qexp{T})$
  is complete for $\bound(\sH')$, according to
  Definition~\ref{def:complete}. The condition
  $\qexp{T}\times\qexp{R}\supset\qexp{T}\times\qexp{S}$ implies, by
  Theorem~\ref{coro:weak-qexp}, the existence of a statistical
  morphism $\mL:\sS(\qexp{T}\times\qexp{R})\to
  \sS(\qexp{T}\times\qexp{S})$ such that
  $\mL(\tau_\xi\otimes\rho_\theta)=\tau_\xi\otimes\sigma_\theta$, for
  all $\xi\in\Xi$ and all $\theta\in\Theta$. By the completeness of
  $\sS(\qexp{T})$, this implies that the linear map
  $\mL:\bound(\sH')\otimes\bound(\sH)\to
  \bound(\sH')\otimes\bound(\sH')$ must in fact have the form
  $\id\otimes\mL'$. We are hence in the position to apply
  Proposition~\ref{prop:ent_ext}, which proves the existence of a
  completely positive, trace-preserving map
  $\mE:\bound(\sH)\to\bound(\sH')$ such that
  $\sigma_\theta=\mE(\rho_\theta)$, for all $\theta\in\Theta$,
  i.e. $\qexp{R}\succ\qexp{S}$.
\end{proof}

The corollary above makes it apparent that complete positivity is
always related with the possibility of \emph{extending} a quantum
system (in this case, a quantum statistical model) by composing it
with an auxiliary one.

\section{Sufficiency of quantum information structures, without
  entanglement}\label{sec:shmaya}

We begin this section with the following definition:

\begin{definition}[Composition of Quantum Information
  Structures]\label{def:comp-str} Given two quantum
  information structures $\boldsymbol{\varrhoup}_{AB}=(\sH_A,\sH_B,\rho_{AB})$ and
  $\boldsymbol{\upomega}_{XY}=(\sH_X,\sH_Y,\omega_{XY})$, the composition
  $\boldsymbol{\varrhoup}_{AB}\otimes \boldsymbol{\upomega}_{XY}$ is defined as the triple
  $(\sH_{A}\otimes\sH_X,\sH_B\otimes\sH_Y,\rho_{AB}\otimes\omega_{XY})$.
\end{definition}

\begin{remark}
  From Definitions~\ref{def:localss},~\ref{def:compos},
  and~\ref{def:comp-str}, it simply follows that
\begin{equation}\label{eq:com-st-sp-cont}
  \sS_{BY}(\boldsymbol{\varrhoup}_{AB}\otimes \boldsymbol{\upomega}_{XY})\supseteq\sS_B(\boldsymbol{\varrhoup}_{AB})\times\sS_Y(\boldsymbol{\upomega}_{XY}).
\end{equation}
\end{remark}


\begin{definition}[Complete Information Structures]\label{def:coplete-info}
  A quantum information structure
  $\boldsymbol{\upomega}_{XY}=(\sH_X,\sH_Y,\omega_{XY})$ is
  \emph{complete} if and only if:
  \begin{enumerate}
  \item the local state space $\sS_{Y}(\boldsymbol{\upomega}_{XY})$ is
    complete (see Definition~\ref{def:complete}), and,
\item for any given linear map $\mL_Y:\bound(\sH_Y)\to\bound(\sH_Y)$,
  $(\id_X\otimes\mL_Y)(\omega_{XY})=\omega_{XY}$ if and only if
  $\mL_Y=\id_Y$.
\end{enumerate}
\end{definition}

\begin{remark}
  In order to explicitly show the existence of a complete information
  structure $\boldsymbol{\upomega}_{XY}=(\sH_X,\sH_Y,\omega_{XY})$,
  let us consider the family of information structures
  $\boldsymbol{\upomega}^p_{XY}=(\sH_X,\sH_Y,\omega_{XY}^p)$, for
  $p\in[0,1]$, where $\dim\sH_X=\dim\sH_Y=d$ and $\omega_{XY}^p$ is an
  isotropic state, that is,
\begin{equation}\label{eq:iso}
  \omega_{XY}^p:=p\Psi^+_{XY}+(1-p)\frac{\openone_{XY}}{d^2},
\end{equation}
with $\Psi^+_{XY}$ denoting a maximally entangled state in
$\sH_X\otimes\sH_Y$. These states are known to satisfy the second
condition in Definition~\ref{def:coplete-info} for $p\neq
0$~\cite{faithful}. Moreover, a simple calculation shows that
\begin{equation}
  \sS_{Y}(\boldsymbol{\upomega}_{XY}^p)=\left\{\left.p\sigma_{Y}+(1-p)\frac{\openone_{Y}}d\right|\sigma_{Y}\in\sS(\sH_{Y})\right\},
\end{equation}
meaning that, for $p\neq 0$, $\sS_{Y}(\boldsymbol{\upomega}_{XY}^p)$
is complete.
\end{remark}

We are now able to state the following:
\begin{proposition}[Comparison of quantum information
  structures]\label{prop:shmaya2}
  Given two quantum information structures
  $\boldsymbol{\varrhoup}_{AB}=(\sH_A,\sH_B,\rho_{AB})$ and
  $\boldsymbol{\varsigmaup}_{AB'}=(\sH_A,\sH_{B'},\sigma_{AB'})$, the
  following are equivalent:
\begin{enumerate}
\item
  \begin{equation}\label{eq:shmaya1}
    \boldsymbol{\varrhoup}_{AB}\succ\boldsymbol{\varsigmaup}_{AB'};
\end{equation}
\item
\begin{equation}
  \left[\boldsymbol{\upomega}_{XY}\otimes\boldsymbol{\varrhoup}_{AB}\right]\ \supset_{XA}\ \left[\boldsymbol{\upomega}_{XY}\otimes\boldsymbol{\varsigmaup}_{AB'}\right],
\end{equation}
for every auxiliary quantum information structure
$\boldsymbol{\upomega}_{XY}=(\sH_X,\sH_Y,\omega_{XY})$;
\item
\begin{equation}
  \left[\boldsymbol{\uppsi}_{XY}^+\otimes\boldsymbol{\varrhoup}_{AB}\right]\ \supset_{XA}\ \left[\boldsymbol{\uppsi}_{XY}^+\otimes\boldsymbol{\varsigmaup}_{AB'}\right],
\end{equation}
for some auxiliary quantum information structure
$\boldsymbol{\uppsi}_{XY}^+=(\sH_X,\sH_Y,\Psi^+_{XY})$, such that
$\Psi^+_{XY}$ is a maximally entangled pure state and
$\sH_X\cong\sH_Y\cong\sH_{B'}$;
\item
  \begin{equation}\label{eq:shmaya2}
  \left[\boldsymbol{\upomega}_{XY}\otimes\boldsymbol{\varrhoup}_{AB}\right]\ \supset_{XA}\ \left[\boldsymbol{\upomega}_{XY}\otimes\boldsymbol{\varsigmaup}_{AB'}\right],
\end{equation}
for some auxiliary complete quantum information structure
$\boldsymbol{\upomega}_{XY}=(\sH_X,\sH_Y,\omega_{XY})$ with
$\sH_Y\cong\sH_{B'}$.
\end{enumerate}
\end{proposition}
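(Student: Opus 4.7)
\emph{Easy directions.} The implication $1\Rightarrow 2$ follows directly: if $\sigma_{AB'}=(\id_A\otimes\mE_B)(\rho_{AB})$ for a CPTP map $\mE_B$, then for every $\boldsymbol{\upomega}_{XY}$ the dual $\mE_B^*$ is completely positive and unital, so $\povm{P}_{YB'}\mapsto(\id_Y\otimes\mE_B^*)(\povm{P}_{YB'})$ sends any $B'$-side POVM to a valid $B$-side POVM that achieves the same expected payoff for any $XA$-side payoff operators. The implications $2\Rightarrow 3$ and $2\Rightarrow 4$ are trivial by specialization; moreover $\boldsymbol{\uppsi}^+_{XY}$ itself is a complete information structure in the sense of Definition~\ref{def:coplete-info} (its local state space equals $\sS(\sH_Y)$, and Choi-Jamio{\l}kowski gives the faithfulness), so item~3 is a special case of item~4 and it suffices to prove $4\Rightarrow 1$.

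\emph{Building the morphism $\mL_B$.} By Theorem~\ref{thm:main} applied to the tensored structures, the hypothesis of item~4 yields a trace-preserving statistical morphism $\mL_{YB}:\bound(\sH_Y\otimes\sH_B)\to\bound(\sH_Y\otimes\sH_{B'})$ satisfying
\begin{equation}\label{eq:starplan}
(\id_{XA}\otimes\mL_{YB})(\omega_{XY}\otimes\rho_{AB})=\omega_{XY}\otimes\sigma_{AB'}.
\end{equation}
By the completeness of $\sS_Y(\boldsymbol{\upomega}_{XY})$ we can select $0\le P^k_X\le\openone_X$ whose reductions $\tilde\omega_Y^k:=\Tr_X[(P^k_X\otimes\openone_Y)\omega_{XY}]$ span $\bound(\sH_Y)$. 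Multiplying~\eqref{eq:starplan} by $P^k_X\otimes\openone$, tracing over $X$, and invoking linearity gives the key identity
\begin{equation}\label{eq:keyplan}
(\id_A\otimes\mL_{YB})(Y_Y\otimes\rho_{AB})=Y_Y\otimes\sigma_{AB'}\qquad\forall\,Y_Y\in\bound(\sH_Y).
\end{equation}
Fix any state $\omega_Y^0\in\sS_Y(\boldsymbol{\upomega}_{XY})$ and set $\mL_B(X_B):=\Tr_Y[\mL_{YB}(\omega_Y^0\otimes X_B)]$; this $\mL_B$ is linear and trace-preserving, and specializing~\eqref{eq:keyplan} to $Y_Y=\omega_Y^0$ followed by $\Tr_Y$ gives $(\id_A\otimes\mL_B)(\rho_{AB})=\sigma_{AB'}$.

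\emph{Extension via Proposition~\ref{prop:ent_ext}.} We verify that $\id_{Y'}\otimes\mL_B$, with $\sH_{Y'}\cong\sH_Y\cong\sH_{B'}$, is a statistical morphism from $\sS_Y(\boldsymbol{\upomega}_{XY})\times\sS_B(\boldsymbol{\varrhoup}_{AB})$ to $\sS_Y(\boldsymbol{\upomega}_{XY})\times\sS_{B'}(\boldsymbol{\varsigmaup}_{AB'})$. The output-inclusion condition comes from $(\id_A\otimes\mL_B)(\rho_{AB})=\sigma_{AB'}$ by partial-tracing against $0\le R_A\le\openone$ (the compatibility $\rho_A=\sigma_A$ being obtained by tracing~\eqref{eq:starplan} over $YB'$). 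For the test-pullback condition, contracting~\eqref{eq:keyplan} by $\Tr_A[(R_A\otimes\openone)\cdot]$ yields the factorization
\begin{equation}
\mL_{YB}(Y_Y\otimes\rho_B^R)=Y_Y\otimes\mL_B(\rho_B^R),\quad\rho_B^R:=\Tr_A[(R_A\otimes\openone)\rho_{AB}],
\end{equation}
so that for any POVM $\povm{P}_{Y'B'}$ and any $\omega_Y\otimes\rho_B^R\in\sS_Y\times\sS_B$ one has
\begin{equation}
\Tr[(\id_{Y'}\otimes\mL_B^*)(\povm{P}_{Y'B'})(\omega_Y\otimes\rho_B^R)]=\Tr[\mL_{YB}^*(\povm{P}_{YB'})(\omega_Y\otimes\rho_B^R)],
\end{equation}
and the right-hand side is realized by a genuine POVM on $\sH_Y\otimes\sH_B$ because $\mL_{YB}^*$ sends POVMs to tests on $\sS_{YB}(\boldsymbol{\upomega}_{XY}\otimes\boldsymbol{\varrhoup}_{AB})\supseteq\sS_Y\times\sS_B$. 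Proposition~\ref{prop:ent_ext} with $\sS_0=\sS_Y(\boldsymbol{\upomega}_{XY})$ then yields a CPTP map $\mE_B:\bound(\sH_B)\to\bound(\sH_{B'})$ that agrees with $\mL_B$ on $\sS_B(\boldsymbol{\varrhoup}_{AB})$; a final linearity argument (tracing the difference $(\id_A\otimes(\mE_B-\mL_B))(\rho_{AB})$ against arbitrary operators on $A$, each of which produces an element of the linear span of $\sS_B(\boldsymbol{\varrhoup}_{AB})$ where $\mE_B$ and $\mL_B$ coincide) gives $(\id_A\otimes\mE_B)(\rho_{AB})=\sigma_{AB'}$, i.e.~$\boldsymbol{\varrhoup}_{AB}\succ\boldsymbol{\varsigmaup}_{AB'}$. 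The main delicate step is the test-pullback verification, whose validity hinges on combining the factorization~\eqref{eq:keyplan} with the state-space inclusion $\sS_Y\times\sS_B\subseteq\sS_{YB}(\boldsymbol{\upomega}_{XY}\otimes\boldsymbol{\varrhoup}_{AB})$ so that tests on the joint system can be lifted back to tests on the local one.
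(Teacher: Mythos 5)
Your proof is correct, and for the nontrivial direction it follows the paper's overall skeleton: close the cycle $1\Rightarrow 2\Rightarrow 3\Rightarrow 4\Rightarrow 1$, invoke Theorem~\ref{thm:main} on the tensored structures to obtain a trace-preserving statistical morphism $\mL_{YB}$ with $(\id_{XA}\otimes\mL_{YB})(\omega_{XY}\otimes\rho_{AB})=\omega_{XY}\otimes\sigma_{AB'}$, factor out the $Y$ part, and finish with Proposition~\ref{prop:ent_ext}. Where you genuinely diverge is in \emph{how} the $Y$ part is factored out. The paper appeals to the second (``faithfulness'') condition of Definition~\ref{def:coplete-info} to assert that Eq.~(\ref{eq:faith}) forces $\mL_{YB}=\id_Y\otimes\mL_B$ as an identity of linear maps on all of $\bound(\sH_Y\otimes\sH_B)$, and then restricts. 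You instead use only the first condition (completeness of the local state space $\sS_Y(\boldsymbol{\upomega}_{XY})$) to derive the factorization $(\id_A\otimes\mL_{YB})(Y_Y\otimes\rho_{AB})=Y_Y\otimes\sigma_{AB'}$ on the operator subspace actually constrained by the hypothesis, define $\mL_B$ by a partial trace against a reference state, and verify the statistical-morphism property of $\id_Y\otimes\mL_B$ directly from this identity together with the inclusion $\sS_Y\times\sS_B\subseteq\sS_{YB}(\boldsymbol{\upomega}_{XY}\otimes\boldsymbol{\varrhoup}_{AB})$. This buys two things: (i) it sidesteps the somewhat delicate claim that faithfulness pins down $\mL_{YB}$ \emph{globally}, when Eq.~(\ref{eq:faith}) only constrains it on the span of $\sS_Y(\boldsymbol{\upomega}_{XY})$ tensored with the $A$-reductions of $\rho_{AB}$ --- your version is the more careful one here; and (ii) it shows that condition~2 of Definition~\ref{def:coplete-info} is in fact not needed for $4\Rightarrow 1$, so the hypothesis of item~4 could be weakened to completeness of the local state space alone. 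Your closing linearity argument, upgrading the agreement of $\mE_B$ and $\mL_B$ on $\sS_B(\boldsymbol{\varrhoup}_{AB})$ to $(\id_A\otimes\mE_B)(\rho_{AB})=\sigma_{AB'}$ by contracting against an informationally complete family on $A$, is a step the paper leaves implicit, and it is right to spell it out.
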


\begin{proof}
  The implications ``1 $\Rightarrow$ 2'' and ``2 $\Rightarrow$ 3'' are
  trivial. The implication ``3 $\Rightarrow$ 4'' follows from the fact
  that, from Eq.~(\ref{eq:iso}), any maximally entangled information
  structure is, in particular, complete. We hence prove only the
  implications ``4 $\Rightarrow$ 1''.

  Starting from~(\ref{eq:shmaya2}), Theorem~\ref{thm:main} guarantees
  the existence of a statistical morphism
  $\mL_{YB}:\sS_{YB}(\boldsymbol{\upomega}_{XY}\otimes\boldsymbol{\varrhoup}_{AB})\to
  \sS_{YB'}(\boldsymbol{\upomega}_{XY}\otimes\boldsymbol{\varsigmaup}_{AB'})$
  such that
  \begin{equation}\label{eq:faith}
    \omega_{XY}\otimes    \sigma_{AB'}=(\id_{XA}\otimes\mL_{YB})(\omega_{XY}\otimes\rho_{AB}).
  \end{equation}
  Since $\omega_{XY}$ is a complete state, Eq.~(\ref{eq:faith})
  implies that the linear map $\mL_{YB}$ must in fact have the form
\begin{equation}
  \mL_{YB}\equiv\id_Y\otimes\mL_B.
\end{equation}
Further, the fact that $\id_Y\otimes\mL_B$ is a statistical morphism from
$\sS_{YB}(\boldsymbol{\upomega}_{XY}\otimes\boldsymbol{\varrhoup}_{AB})$
to
$\sS_{YB'}(\boldsymbol{\upomega}_{XY}\otimes\boldsymbol{\varsigmaup}_{AB'})$
implies that $\id_Y\otimes\mL_B$ is also a statistical morphism, in particular,
from
$\sS_{Y}(\boldsymbol{\upomega}_{XY})\times\sS_{B}(\boldsymbol{\varrhoup}_{AB})$
to
$\sS_{Y}(\boldsymbol{\upomega}_{XY})\times\sS_{B'}(\boldsymbol{\varsigmaup}_{AB'})$,
because of Eq.~(\ref{eq:com-st-sp-cont}). Finally, since we assumed
that $\sS_{Y}(\boldsymbol{\upomega}_{XY})$ is a complete state space,
we can apply Proposition~\ref{prop:ent_ext} to show that, indeed,
$\boldsymbol{\varrhoup}_{AB}\succ\boldsymbol{\varsigmaup}_{AB'}$.
\end{proof}

\begin{remark}
  In Ref.~\cite{shmaya}, the statement ``3 $\Leftrightarrow$ 1'' is
  proved. Proposition~\ref{prop:shmaya2} shows that the hypotheses can
  in fact be relaxed so that only the property of \emph{completeness},
  rather than entanglement, is required. Let us consider, as an
  example, the set of isotropic states defined in~(\ref{eq:iso}). Such
  states are known to be separable for $p\le\frac 1{d+1}$. Hence, by
  fixing a value $p_*\in\left(0,\frac 1{d+1}\right]$, we have that
  $\omega_{XY}^{p_*}$ is complete, induces a complete state space on
  $Y$, and, yet, it is a separable state. This fact recalls the
  results of Ref.~\cite{faithful}, where it was first noted how
  completeness (there referred to as ``faithfulness'') can replace
  entanglement, although in a different contest (namely, quantum process
  tomography).
\end{remark}

\begin{remark}
  In Remark~\ref{rem:from-str-to-qmod} we described how quantum
  statistical models can be identified with those quantum information
  structures, for which a decomposition like that in
  Eq.~(\ref{eq:bip-exp}) exists. One should hence expect that
  Proposition~\ref{prop:shmaya2} implies
  Proposition~\ref{prop:blackwell-comp-pos}, whenever
  $\boldsymbol{\varrhoup}_{AB}=(\sH_A,\sH_B,\rho_{AB})$ and
  $\boldsymbol{\varsigmaup}_{AB'}=(\sH_A,\sH_{B'},\sigma_{AB'})$ can
  be written in the form of Eq.~(\ref{eq:bip-exp}). In such a case,
  indeed, the fourth statement of Proposition~\ref{prop:shmaya2} can
  be used to re-derive Proposition~\ref{prop:blackwell-comp-pos}
  simply by considering an auxiliary quantum information structure
  $\boldsymbol{\upomega}_{XY}=(\sH_X,\sH_Y,\omega_{XY})$ of the form
  \begin{equation}
    \omega_{XY}:=\frac 1{|\Xi|}\sum_{\xi\in\Xi}|\xi\>\<\xi|_X\otimes\tau^\xi_Y.
\end{equation}
The crucial observation is that the above quantum information
structure is complete if and only if the corresponding quantum
statistical model
$\qexp{T}_{\boldsymbol{\upomega}}:=(\Xi,\sH_Y,\boldsymbol{\tau})$,
with $\boldsymbol{\tau}=(\tau_\xi;\xi\in\Xi)$, is complete. The rest
of the proof is left to the interested reader.
\end{remark}

\section{Conclusions}\label{sec:concl}

We extended some results from the theory of comparison of statistical
models to quantum statistical decision theory. This has been done by
relaxing Petz's definition of coarse-grainings to that of statistical
morphisms. By using such generalized notion, we introduced comparison
criteria for quantum statistical models and quantum information
structures, which are the direct generalization to a non-commutative
setting of the comparison criteria used in classical decision
theory. The framework we described turned out to be general enough to
encompass both the classical and the quantum case. We showed how
results that previously were independent, like the
Blackwell-Sherman-Stein theorem for statistical models and Shmaya's
result for quantum information structures, can be in fact recovered as
special cases of a single, unifying comparison theorem, which also
sheds new light on both: the BSS theorem has been extended to a
quantum-classical scenario, and Shmaya's comparison criterion has been
strengthened by removing the need of auxiliary entangled resources.

As a final remark, the reader might have noticed that, as long as the
states of a statistical theory can be represented by self-adjoint
matrices (not necessarily positive) of unit trace, the definitions of
information ordering and m-sufficiency proposed here can be
straightforwardly extended to consider such cases as well. For such
\emph{generalized probabilistic theories}, an extension of the BSS
theorem can also be proved, along the same lines described in the
present work.

\section*{Acknowledgements}
The author is grateful to Masanao Ozawa for illuminating conversations
and clarifying suggestions. Discussions with Giacomo Mauro D'Ariano
and Madalin Guta greatly contributed in improving the presentation of
the work. This research was supported by the Program for Improvement
of Research Environment for Young Researchers from Special
Coordination Funds for Promoting Science and Technology (SCF)
commissioned by the Ministry of Education, Culture, Sports, Science
and Technology (MEXT) of Japan.  Part of this work has been done when
the author was visiting the Statistical Laboratory of the University
of Cambridge.

\appendix

\newpage\section{Proof of Proposition~\ref{prop0}}\label{app:a}

\begin{proposition0}
  For any two given statistical models
  $\exp{E}=(\Theta,\Delta,\boldsymbol{\alpha})$ and
  $\exp{F}=(\Theta,\Delta',\boldsymbol{\beta})$, $\exp{E}\supset
  \exp{F}$ if and only if $\exp{E}\supset_{\mathrm{Bayes}}\exp{F}$.
\end{proposition0}

\begin{proof}
  The statement can be proved by using the Separation Theorem between
  convex sets~\cite{rocka} as follows. (Notice that in our case all
  convex sets are closed and bounded, so that we can proceed without
  paying attention to too many technical details.)

  Generally speaking, the convex set
  $\mathcal{C}_1\subset\mathds{R}^N$ is not contained in the convex
  set $\mathcal{C}_2\subset\mathds{R}^N$ if and only if there exists a
  point $\vec{v}\in\mathcal{C}_1$ such that
  $\vec{v}\notin\mathcal{C}_2$. Then, the Separation Theorem
  (Corollary 11.4.2 of Ref.~\cite{rocka}), applied to the convex set
  $\mathcal{C}_2$ and the single-point (hence convex) set
  $\{\vec{v}\}$, states that, for such $\vec{v}$, there exists a
  vector $\vec{b}\in \mathds{R}^N$ such that
\begin{equation}
  \max_{\vec{w}\in\mathcal{C}_2}\sum_{n=1}^Nb^nw^n<
  \sum_{n=1}^Nb^nv^n.
\end{equation}
Equivalently, we can say that the convex set
$\mathcal{C}_1\subset\mathds{R}^N$ is contained in the convex set
$\mathcal{C}_2\subset\mathds{R}^N$ if and only if, for all vectors
$\vec{b}\in \mathds{R}^N$,
\begin{equation}
\max_{\vec{w}\in\mathcal{C}_2}\sum_{n=1}^Nb^nw^n\ge
\max_{\vec{v}\in\mathcal{C}_1}\sum_{n=1}^Nb^nv^n.
\end{equation}

Moreover, for any given non-vanishing probability distribution
$\pi(n)$, $\sum_n\pi(n)=1$, the convex set
$\mathcal{C}_1\subset\mathds{R}^N$ is contained in the convex set
$\mathcal{C}_2\subset\mathds{R}^N$ if and only if, for all vectors
$\vec{b}\in \mathds{R}^N$,
\begin{equation}
  \max_{\vec{w}\in\mathcal{C}_2}\sum_{n=1}^N\pi(n)b^nw^n\ge
  \max_{\vec{v}\in\mathcal{C}_1}\sum_{n=1}^N\pi(n)b^nv^n.
\end{equation}
This follows from the fact that the above equation has to hold for all
$\vec{b}\in \mathds{R}^N$, so that the non-vanishing probabilities
$\pi(n)$ can be absorbed in the definition of $\vec{b}$. In
particular, there is no loss of generality in considering
$\pi(n)=1/N$, for all $n$.

We now turn to the case of $\mathcal{C}(\exp{E},\outx,\ell)$ and
$\mathcal{C}(\exp{F},\outx,\ell)$, choosing the \emph{a priori}
probability on $\Theta$ as $\pi(\theta)=1/|\Theta|$, for all
$\theta$. Then, for every $\vec{b}\in\mathds{R}^{|\Theta|}$,
\begin{equation}
  \max_{\phi:\textrm{
      $\outx$-r.d.f.}}\frac 1{|\Theta|}\sum_{\theta\in\Theta}b^\theta v^\theta(\phi;\exp{E},\outx,\ell)=\max_{\phi:\textrm{
      $\outx$-r.d.f.}}\frac 1{|\Theta|}\sum_{\theta\in\Theta}v^\theta(\phi;\exp{E},\outx,\widetilde{\ell}),
\end{equation}
where the function $\widetilde{\ell}$ at the left hand side is another
payoff function with such that
$\widetilde{\ell}(\theta,i)=\ell(\theta,i)b^{\theta}$. In other words,
the vector $\vec{b}$ can be absorbed in the definition of the payoff
function. This means that, for any finite set of decisions $\outx$ and
any payoff function $\ell:\Theta\times\outx\to\mathbb{R}$,
$\mathcal{C}(\exp{E},\outx,\ell)\supseteq
\mathcal{C}(\exp{F},\outx,\ell)$ if and only if, for every payoff
function $\widetilde{\ell}:\Theta\times\outx\to\mathbb{R}$,
\begin{equation}\label{eq:call_sep}
  \max_{\phi:\textrm{
      $\outx$-r.d.f.}}\frac 1{|\Theta|}\sum_{\theta\in\Theta}v^\theta(\phi;\exp{E},\outx,\widetilde{\ell})\ge
  \max_{\phi':\textrm{ $\outx$-r.d.f.}}\frac 1{|\Theta|}\sum_{\theta\in\Theta}v^\theta(\phi';\exp{F},\outx,\widetilde{\ell}),
\end{equation}
where the maxima are taken over all possible $\outx$-r.d.f. $\phi$ on
$\Delta$ and $\phi'$ on $\Delta'$. This, in turns, implies the
statement.
\end{proof}


\begin{thebibliography}{99}

\bibitem{black49} D~Blackwell, \emph{Comparison of
    reconnaissances}. Unpublished Rand Memorandum RM. 241 (1949).
\bibitem{wald} A~Wald, \emph{Statistical Decision Functions}. (Wiley,
  New York, 1950).
\bibitem{black1} D~Blackwell, \emph{Comparison of experiments}. In
  \emph{Proc. 2nd Berkeley Symposium on Mathematical Statistics and
    Probability}, 93-102 (1951).
\bibitem{sherman} S~Sherman, \emph{On a theorem of Hardy, Littlewood,
    P\'olya and Blackwell}. Proc. Nat. Acad. Sciences {\bf 37},
  826-831 (1951).
\bibitem{stein} C~Stein, \emph{Notes on a Seminar on Theoretical
    Statistics. I. Comparison of experiments}. Report, University of
  Chicago (1951).
\bibitem{blackwell} D~Blackwell, \emph{Equivalent comparisons of
    experiments}. Ann. Math. Stat. {\bf 24}, 265-272 (1953).
\bibitem{lecam1} L~Le~Cam, \emph{Asymptotic methods in statistical
    decision theory}. (Springer Series in Statistics, Springer-Verlag,
  New York, Berlin, Heidelberg, 1986).
\bibitem{torgersen} E~Torgersen, \emph{Comparison of Statistical
    Experiments}. (Cambridge University Press, Cambridge, 1991).
\bibitem{lecam} L~Le~Cam, \emph{Comparison of experiments - a short
    review}. In \emph{Statistics, probability and game theory: Papers
    in honor of David Blackwell}, IMS Lecture Notes, Monograph Series
  {\bf 30}, 127-138 (1996).
\bibitem{goel-ginebra} P~K~Goel and J~Ginebra, \emph{When is one
    experiment `Always better than' another?} J. Royal Stat. Soc. D
  {\bf 52}, 515-537 (2003).
\bibitem{holevo} A~S~Holevo, \emph{Statistical Decision Theory for
    Quantum Systems}. J. Multivar. Analysis {\bf 3}, 337-394 (1973).
\bibitem{ozawa} M~Ozawa, \emph{Optimal Measurements for General
    Quantum Systems}. Rep. Math. Phys. {\bf 18}, 11-28 (1980).
\bibitem{petz} D~Petz, \emph{Sufficient Subalgebras and the Relative
    Entropy of States of a von Neumann
    Algebra}. Comm. Math. Phys. {\bf 105}, 123-131 (1986).
\bibitem{petz2} A~Jen\v{c}ov\'a and D~Petz, \emph{Sufficiency in quantum
    statistical inference}. Comm. Math. Phys. {\bf 263}, 259-276
  (2006).
\bibitem{shmaya} E~Shmaya, \emph{Comparison of information structures
    and completely positive maps}. J. Phys. A: Math. and Gen.~{\bf
    38}, 9717-9727 (2005).
\bibitem{chefles} A~Chefles, \emph{The Quantum Blackwell Theorem and
    Minimum Error State Discrimination}. ArXiv:0907.0866v4 [quant-ph].
\bibitem{morse} N~Morse and R~Sacksteder, \emph{Statistical
    Isomorphism}. Ann. Math. Stat. {\bf 37}, 203-214 (1966).
\bibitem{choi} M-D~Choi, \emph{Positive linear maps on
    $C^*$-algebras}. Canad. J. Math. {\bf 24}, 520-529 (1972).
\bibitem{arveson} W~B~Arveson, \emph{Subalgebras of $C^*$-algebras}.
  Acta Math. {\bf 123}, 141-224 (1969).
\bibitem{rocka} R~T~Rockafellar, \emph{Convex Analysis}. (Princeton
  University Press, Princeton, 1970).
\bibitem{ozawa-instr} M~Ozawa, \emph{Quantum measuring processes of
    continuous observables}. J. Math. Phys. {\bf 25}, 79 (1984).
\bibitem{matsumoto} K~Matsumoto, \emph{A quantum version of
    randomization condition}. ArXiv:1012.2650v1 [quant-ph].
\bibitem{faithful} G~M~D'Ariano and P~Lo~Presti, \emph{Imprinting a
    complete information about a quantum channel on its output
    state}. Phys. Rev. Lett. {\bf 91}, 047902 (2003).

\end{thebibliography}
\end{document}